\documentclass[review]{elsarticle}[11pt]

\usepackage{amsmath}  
\usepackage{amssymb}   
\usepackage{amsthm}  
\usepackage{lineno,hyperref}
\modulolinenumbers[5]
\newtheorem{theorem}{Theorem}
\newtheorem{lemma}{Lemma}
\newtheorem{remark}{Remark}
\usepackage{hyperref}
\usepackage{amsmath}
\usepackage{graphicx}
\usepackage{amssymb}
\usepackage{graphicx}
\usepackage[dvipsnames,usenames]{color}
\usepackage{subfigure}
\journal{Journal of \LaTeX\ Templates}

\usepackage{hyperref}
\usepackage{amsmath}
\usepackage{graphicx}
\usepackage{amssymb}
\usepackage{graphicx}
\usepackage[dvipsnames,usenames]{color}
\usepackage{subfigure}
\newcommand{\PsoneDone}{P_{s}(1,D,\{1\})}
\newcommand{\PfoneDone}{\overline{P}_{s}(1,D,\{1\})}
\newcommand{\PsoneDonetwo}{P_{s}(1,D,\{1,2\})}
\newcommand{\PfoneDonetwo}{\overline{P}_{s}(1,D,\{1,2\})}

\newcommand{\PftwoDtwo}{\overline{P}_{s}(2,D,\{2\})}
\newcommand{\PstwoDonetwo}{P_{s}(2,D,\{1,2\})}

\newcommand{\PskDk}{P_{s}(k,D,\{k\})}
\newcommand{\PfkDk}{\overline{P}_{s}(k,D,\{k\})}
\newcommand{\PskDonetwo}{P_{s}(k,D,\{1,2\})}
\newcommand{\PfDonetwo}{\overline{P}_{s}(0,D,\{1,2\})}
%succ./fail prob. from users to R_1
\newcommand{\PsoneRoneone}{P_{s}(1,R_1,\{1\})}
\newcommand{\PfoneRoneone}{\overline{P}_{s}(1,R_1,\{1\})}
\newcommand{\PsoneRoneonetwo}{P_{s}(1,R_1,\{1,2\})}
\newcommand{\PfoneRoneonetwo}{\overline{P}_{s}(1,R_1,\{1,2\})}
\newcommand{\PstwoRonetwo}{P_{s}(2,R_1,\{2\})}
\newcommand{\PftwoRonetwo}{\overline{P}_{s}(2,R_1,\{2\})}
\newcommand{\PstwoRoneonetwo}{P_{s}(2,R_1,\{1,2\})}

%succ./fail prob. from users to R_2
\newcommand{\PsoneRtwoone}{P_{s}(1,R_2,\{1\})}
\newcommand{\PfoneRtwoone}{\overline{P}_{s}(1,R_2,\{1\})}
\newcommand{\PsoneRtwoonetwo}{P_{s}(1,R_2,\{1,2\})}
\newcommand{\PfoneRtwoonetwo}{\overline{P}_{s}(1,R_2,\{1,2\})}
\newcommand{\PstwoRtwotwo}{P_{s}(2,R_2,\{2\})}
\newcommand{\PftwoRtwotwo}{\overline{P}_{s}(2,R_2,\{2\})}
\newcommand{\PstwoRtwoonetwo}{P_{s}(2,R_2,\{1,2\})}
\newcommand{\PftwoRtwoonetwo}{\overline{P}_{s}(2,R_2,\{1,2\})}
\newcommand{\PskRionetwo}{P_{s}(k,R_i,\{1,2\})}
\newcommand{\PfRionetwo}{\overline{P}_{s}(0,R_i,\{1,2\})}
\newcommand{\PskRik}{P_{s}(k,R_i,\{k\})}
\newcommand{\PfRik}{\overline{P}_{s}(0,R_i,\{k\})}

\newcommand{\PfRoneonetwo}{\overline{P}_{s}(0,R_1,\{1,2\})}
\newcommand{\PfRtwoonetwo}{\overline{P}_{s}(0,R_2,\{1,2\})}
\newcommand{\PskRonek}{P_{s}(k,R_1,\{k\})}
\newcommand{\PfRonek}{\overline{P}_{s}(0,R_1,\{k\})}
\newcommand{\PskRtwok}{P_{s}(k,R_2,\{k\})}
\newcommand{\PfRtwok}{\overline{P}_{s}(0,R_2,\{k\})}

%succ./fail prob. from relays to D
\newcommand{\PsRoneDRone}{P_{s}(R_1,D,\{R_1\})}

\newcommand{\PsRiDRi}{P_{s}(R_i,D,\{R_i\})}
\newcommand{\PfRiDRi}{\overline{P}_{s}(R_i,D,\{R_i\})}
\newcommand{\PssRoneDRone}{P_{s}^{*}(R_1,D,\{R_1\})}

\newcommand{\PssRtwoDRtwo}{P_{s}^{*}(R_2,D,\{R_2\})}

\newcommand{\PsRoneDRoneRtwo}{P_{s}(R_1,D,\{R_1,R_2\})}

\newcommand{\PsRiDRoneRtwo}{P_{s}(R_i,D,\{R_1,R_2\})}
\newcommand{\PfRiDRoneRtwo}{\overline{P}_{s}(R_i,D,\{R_1,R_2\})}
\newcommand{\PssRiDRi}{P_{s}^{*}(R_i,D,\{R_i\})}
\newcommand{\PfsRiDRi}{\overline{P}_{s}^{*}(R_i,D,\{R_i\})}

\newcommand{\PsRtwoDRtwo}{P_{s}(R_2,D,\{R_2\})}

\newcommand{\PsRtwoDRoneRtwo}{P_{s}(R_2,D,\{R_1,R_2\})}

%for the stability region for the two-user case
\newcommand{\Bone}{\alpha_{2} \left[\alpha_{1}\Delta_2 + \PsRtwoDRtwo\right]}
\newcommand{\Btwo}{\alpha_{2}\Delta_1+\PsRoneDRone}
\newcommand{\Bthree}{\alpha_{1} \left[\alpha_{2}\Delta_1 + \PsRoneDRone\right]}
\newcommand{\Bfour}{\alpha_{1}\Delta_2+\PsRtwoDRtwo}

%% `Elsevier LaTeX' style
\bibliographystyle{elsarticle-num}
%%%%%%%%%%%%%%%%%%%%%%%

\begin{document}

\begin{frontmatter}

\title{Performance Analysis of a Cooperative Wireless Network with Adaptive Relays}
\author{Ioannis Dimitriou\fnref{myfootnote}}
\fntext[myfootnote]{Corresponding author}
\ead{idimit@math.upatras.gr}
\address{Department of Mathematics, University of Patras, 26500 Patras, Greece}
\ead[url]{http://www.math.upatras.gr/~idimit/}

\author{Nikolaos Pappas}
\ead{nikolaos.pappas@liu.se}

\address{Department of Science and Technology, Link\"{o}ping University, Campus Norrk\"{o}ping, Sweden.}

\begin{abstract}
In this work, we investigate a slotted-time relay assisted cooperative random access wireless network with multipacket (MPR) reception capabilities. MPR refers to the capability of a wireless node to successfully receive packets from more than two other modes that transmit simultaneously at the same slot. We consider a network of $N$ saturated sources that transmit packets to a common destination node with the cooperation of two infinite capacity relay nodes. The relays assist the sources by forwarding the packets that failed to reach the destination. Moreover, the relays have also packets of their own to transmit to the destination. We further assume that the relays employ a state-dependent retransmission control mechanism. In particular, a relay node accordingly adapts its transmission probability based on the status of the other relay. Such a protocol is towards self-aware networks and leads to substantial performance gains in terms of delay. We investigate the stability region and the throughput performance for the full MPR model. Moreover, for the asymmetric two-sources, two-relay case we derive the generating function of the stationary joint queue-length distribution with the aid of the theory of boundary value problems. For the symmetric case, we obtain explicit expressions for the average queueing delay in a relay node without solving a boundary value problem. Extensive numerical examples are presented and provide insights on the system performance.\end{abstract}

\begin{keyword}
Queueing analysis\sep Adaptive transmissions\sep Random-access\sep Multi-packet reception\sep Boundary value problem\sep Delay analysis\sep Throughput
\end{keyword}

\end{frontmatter}

%\linenumbers

\section{Introduction}
Over the past few decades, wireless communications and networking have witnessed an unprecedented growth. The growing demands require high data rates, considerably large coverage areas, and high reliability. Relay-assisted wireless networks have been proposed as a candidate solution ot fulfill these requirements \cite{rab}, since relays can decrease the delay and can also provide increased reliability and higher energy efficiency \cite{cov,nos}. A relay-based cooperative wireless system operates as follows: there is a finite number of sources that transmit packets to a common destination node, and a finite number of relay nodes that assist the sources by storing and retransmitting the packets that failed to reach the destination; e.g., \cite{sad, PappasGC2010,PappasGlobalSIP2013, PappasTWC2015,pap}. 

A cooperation strategy among sources and relays specifies which of the relays will cooperate with the sources. This problem gives rise to the usage of a cooperative space diversity protocol \cite{send}, where each source has a number of ``partners" (i.e., relays) that are responsible for retransmitting its failed packets. Recent advances in IoT networks with multiple nodes and, multiple relays, reveals the challenging task to choose the subset of relays to cooperate in order optimize the network performance, see e.g., \cite{palombara,LiangLu,Ozfatura2018,yang2009,zanella2017}.

\subsection{Related work}
Cooperative relaying is mostly considered at the physical layer, and is based on information-theoretic considerations. The classical relay channel was first examined in \cite{mul} and later in \cite{cov}. Recently, cooperative communications have received renewed attention, as a powerful technique to combat fading and attenuation in wireless networks; e.g., \cite{send,lan}. Most of the research has concentrated on information-theoretic studies. Recent works \cite{PappasTWC2015,pap,sad,rong} shown that similar gains can be achieved by network-layer cooperation. By network-layer cooperation, relaying is assumed to take place at the protocol level avoiding physical layer considerations.

In addition, random access recently re-gained interest due to the increased number of communicating devices in 5G networks, and the need for massive uncoordinated access \cite{Laya2014}. Random access and alternatives schemes and their effect on the operation of LTE and LTE-A are presented in \cite{Laya2014}, \cite{KoseogluTCOM2016}, \cite{LeungTWC2012}. In \cite{PopovskiSPL2017}, the effect of random access in Cloud-Radio Access Network is considered. A Markov chain model for
the calculation of the average cooperation delay in persistent relay carrier sensing multiple access scheme is presented in \cite{alonso1}, while in \cite{alonso2} three on-demand cooperation strategies were proposed to manage the multiple relay
access control problem to handle relay transmissions in an effective manner; see also \cite{alonso3}. An analytical cross-layer framework to model end-to-end metrics (i.e., throughput and
energy efficiency), in two-way cooperative networks subject to correlated shadowing conditions is proposed in \cite{antonopoulos}.

The vast majority of the related literature focused on the characterization of the stable throughput region, i.e. the stability region, which gives the set of arrival rates such that there exist transmission probabilities under which the system is stable. Clearly, this is a meaningful metric to measure the impact of bursty traffic and the concept of interacting nodes in a network; e.g., \cite{LuoAE1999,Rao_TIT1988,szpa}. In addition to throughput, delay is another important metric that recently received considerable attention due to the development of 5G and beyond networks and the rapid growth on supporting real-time applications, which in turn require delay-based guarantees; e.g., \cite{Laya2014,KoseogluTCOM2016}. 

However, due to the interdependence among queues, the characterization of the delay even in small networks with random access is a rather difficult task, even for the non-cooperative collision channel model \cite{nain}. 
In the collision channel model when more than one nodes transmit simultaneously, none of them will be successful. Such a channel model is accurate for modeling wire-line communication, but it is not appropriate to model the probabilistic reception in wireless multiple access networks. MPR access scheme have been recently developed to cope with the probabilistic reception in wireless systems. Under such a scheme, we can have successful transmissions even if more than one nodes transmit simultaneously. For the \textbf{non}-cooperative multipacket reception model, delay analysis was performed in \cite{NawareTong2005}, based on the assumption of a symmetric network. Recently, the authors in \cite{dimpap} generalized the model in \cite{NawareTong2005,nain} by considering time-varying links between nodes where the channel state information was modeled according to a Gilbert-Elliot model, and obtain explicit expressions for the queueing delay in terms of the solution of a boundary value problem; see also \cite{dimpaptwc,PappasITC30}. 

The study of queueing systems using the theory of boundary value problems was initiated in \cite{fay1}, and a concrete methodological approach was given in \cite{coh,fay}. The vast majority of queueing models are analyzed with the aid of the theory of boundary value problems referring to continuous time models, e.g., \cite{avr,box,dim1,dim2,fr,guillemin2004,Guillemin2013,van}. On the contrary, there are very few works on the analysis of discrete time models \cite{nain,dimpaptwc,szpa,szpa1,szpa2}. This is mainly due to the complex behavior of the underlying random walk, which reflects the interdependence and coupling among the queues.  
\subsection{Contribution}
Our contribution is summarized as follows. We consider a cooperative wireless network with $N$ saturated sources, two relay nodes with adaptive transmission control, and a common destination. Our primary interest is to investigate the stability conditions, the throughput performance, and the queueing delay experienced at the buffers of relay nodes. The time is slotted, corresponding to the duration of a transmission of a packet, and the sources/relay nodes access the medium in a random access manner. The sources transmit packets to the destination with
the cooperation of the two relays. If a direct transmission of a source's packet to the destination fails, the relays store it in their queues acoording to a probabilistic policy, and try to forward it to the destination at a subsequent time slot. Moreover, the relays have also external bursty arrivals that are stored in their infinite capacity queues. We consider MPR capabilities at the destination node.

We assume that sources and relays transmit in different channels, but the destination node overhears both of them. In particular, the destination node first senses the sources. If it senses no activity from the sources, it switches to the relays. For modeling reasons and to enhance the readability, we assume that the ``sensing" time is negligible. The relays are accessing the wireless channel randomly and employ a state-dependent transmission protocol. More precisely, a relay adapts its transmission characteristics based on the status of the other relay in order to exploit its idle slots, and to increase its transmission efficiency, which in turn leads towards self-aware networks \cite{mah}. Note that this feature is common in cognitive radios \cite{sad,mah}. To the best of our knowledge this variation of random access has not been reported in the literature. Our contribution is twofold. The first part focused on the stable throughput region, and the second on the queueing delay analysis at relay nodes.

\subsubsection{Stability analysis and throughput performance}
We provide the throughput analysis of the asymmetric two-sources random access network, and the symmetric $N$-sources random access network, both under MPR channel model. The performance characterization for $N$ symmetric sources can provide insights on scalability of the network. In addition, we provide the stability conditions for the queues at the relays. 
 
\subsubsection{Delay Analysis}
The second part of our contribution refers to the delay analysis. Except its practical implications, our work is also theoretically oriented. To the best of our knowledge, there is no other work in the related literature that deals with the detailed delay analysis of an asymmetric random access cooperative wireless system with adaptive transmissions and MPR capabilities. In particular, we consider a Markov chain in the positive quadrant which posses a partial spatial homogeneity, which in turn, reflects the ability of the relays to adapt their re-transmission capabilities.   

To enhance the readability of our work we consider the case of $N=2$ sources, and focus on a subclass of MPR models, called the ``capture" channel, under which at most one packet can be successfully decoded by the receiver of the node $D$, even if more than one nodes transmit\footnote{Recent advances in IoT, LoRaWAN \cite{bankov} reveals the importance and applicability of the capture  channel model.}. 

We have to mention, that the assumption of two sources is not restrictive, and our analysis can be extended to the general case of $N$ sources. Moreover, our analysis remains valid even for the case of general MPR model. However, in both cases some important technical requirements must be further taken into account, which in turn will worse the readability of the paper and further increase its length. Besides, our aim here is to focus on the fundamental problem of characterizing the delay in a cooperative wireless network with two relays, and our model and its analysis serve as a building block for more general cases. 

Our system is modeled as a two-dimensional discrete time Markov chain, and we derive the generating function of the stationary joint relay queue length distribution in terms of the solution of a Riemann-Hilbert boundary value problem. Furthermore, each relay node employs a state-dependent transmission policy, under which it adapts its transmission probabilities based on the status of the queue of the other relay. In addition, the kernel of this functional equation has never been treated in the related literature. More precisely,
\begin{itemize}
\item Based on a relation among the values of the transmission probabilities we distinguish the analysis in two cases, which are different both in the modeling, and in the technical point of view. In particular, the analysis leads to the formulation of two boundary value problems \cite{ga} (i.e., a Dirichlet, and a Riemann-Hilbert problem), the solution of which provides the generating function of the stationary joint distribution of the queue size for the relays. This is the key element for obtaining expressions for the average delay at each relay node. To the best of our knowledge, it is the first time in the related literature on cooperative networks with MPR capabilities, where such an analysis is performed.
\item Furthermore, for the two-sources, two-relay symmetric system, we provide an explicit expression for the average queueing delay, without the need of solving a boundary value problem.
\end{itemize}
Concluding, the analytical results in this work, to the best of our knowledge, have not been reported in the literature.

The rest of the paper is organized as follows. In Section \ref{mod} we describe the system model in detail. Section \ref{stab-anal-2} is devoted to the investigation of the throughput and the stability conditions for the asymmetric MPR model of $N=2$ sources, while in Section \ref{stab-anal-n} we generalize our previous results for the general case of $N$ sources, both with MPR capabilities at the destination. In Section \ref{anal} we focus on the delay analysis for the general asymmetric two-sources, two-relays network. A fundamental functional equation is derived, and some preparatory results in view of the resolution of the functional equation are obtained. We formulate and solve two boundary value problems, the solution of which provide the generating function of the stationary joint queue length distribution of relay nodes. The basic performance metrics are obtained, and important hints regarding their numerical evaluation are also given. In Section \ref{sym} we obtain an explicit expression for the average delay at each relay node for the symmetrical system without solving a boundary value problem, while in Section \ref{idle} we show how to adapt our analysis when we are interesting on time instants the relays get empty. Finally, numerical examples that shows insights in the system performance are given in Section \ref{num}.  

\section{Model description and notation}\label{mod}
In this work, we consider a network consisting of $N$ saturated sources, two relays, and one destination node. In order to facilitate the presentation and the description of the cooperative protocol, we describe in the following the case of $N=2$ saturated sources assisted by two relays as depicted in Fig. \ref{mood}.
\subsection{Network Model}
We consider a network of $N=2$ saturated sources, say $P_1$ and $P_2$, two relay nodes, denoted by $R_{1}$ and $R_{2}$, and a common destination node $D$ as depicted in Fig. \ref{mood}.\footnote{In this section we will present the system model for the case of two sources. However, in Section \ref{stab-anal-n}, we consider the case where there are $N$-symmetric sources.} The sources transmit packets to the node $D$ with the cooperation of the relays. The packets have equal length and the time is divided into slots corresponding to the transmission time of a packet. 

We assume that the relays and the destination have MPR capabilities and the success probabilities for the transmissions will be provided in Section \ref{sec:PHY}. As already stated, MPR is the most appropriate model to capture the wireless transmissions. 

Let $N_{i,n}$ be the number of packets in the buffer of relay node $R_{i}$, $i=1,2$, at the beginning of the $n$th slot. Moreover, during the time interval $(n,n+1]$ (i.e., during a time slot) the relay $R_{i}$, $i=1,2$ generates also packets of its own (i.e., exogenous traffic). Let $\{A_{i,n}\}_{n\in N}$ be a sequence of i.i.d. random variables where $A_{i,n}$ represents the number of packets which arrive at $R_i$ in the interval $(n,n+1]$, with $E(A_{i,n})=\widehat{\lambda}_{i}<\infty$. Under such assumptions $\mathbf{N}_{n}=\{(N_{1,n},N_{2,n});n\in\mathbb{N}\}$ is a two dimensional discrete time Markov chain with state space $\mathcal{S}=\{0,1,...\}\times\{0,1,...\}$.\footnote{The network with pure relays can be obtained by replacing $\widehat{\lambda}_{1}=\widehat{\lambda}_{2}=0$.}

The sources have random access to the medium with no coordination among them. At the beginning of a slot, the source $P_k$ attempts to transmit a packet with a probability $t_{k}$, $k=1,2$, i.e., with probability $\bar{t}_{k}=1-t_{k}$ remains silent. The sources and the relays transmit in different channel frequencies, but the destination node can overhear both of them. We assume that node $D$ first senses the sources, and if it senses no activity from them, it switches to the channel of the relays. For modeling purposes we assume that this sensing time is negligible comparing to the duration of the time slot. If a direct packet transmission from a source to node $D$ fails, and at the same time if at least one of the relays is able to decode this packet, then, the failed packet is stored in the relay buffer\footnote{See subsection \ref{relaycoop} for more details.}. The relay node is responsible to forward it to the node $D$ at a subsequent time slot. The queues at the relays are assumed to have infinite size. Note that in case node $D$ senses activity from the sources it keeps listening to the source channel during the slot, and thus it is unavailable to the relays, during that slot.
 
In case node $D$ senses no activity from the sources at the beginning of a slot, it switches to the channel of relay nodes.
Due to the interference among the relays, we consider the following state-dependent access policy: 
\begin{enumerate}
\item If both relays are non empty, $R_{i}$, $i=1,2,$ transmits a packet with probability $\alpha_{i}$ (with probability $\bar{\alpha}_{i}=1-\alpha_{i}$ remains silent).
\item If $R_{1}$ (resp. $R_{2}$) is the only non-empty, it adapts its transmission probability. More specifically, it transmits a packet with a probability $\alpha_{i}^{*}>\alpha_{i}$, in order to utilize the idle slot of the neighbor relay node (with probability $\bar{\alpha^{*}_{i}}=1-\alpha^{*}_{i}$ remains silent).\footnote{We consider the general case for $\alpha_{i}^{*}$ instead of assuming directly $\alpha_{i}^{*}=1$. This can handle cases where the node cannot transmit with probability one even if the other node is silent, e.g., when the nodes are subject to energy limitations. It is outside of the scope of this work to consider specific cases and we intent to keep the proposed analysis general. Note also that in such a case, a relay node is aware about the state of its neighbor, since in such a shared access network, it is practical to assume a minimum exchanging information of one bit between the nodes.}
\end{enumerate}  

\begin{figure}[!ht]
\centering
\includegraphics[scale=0.3]{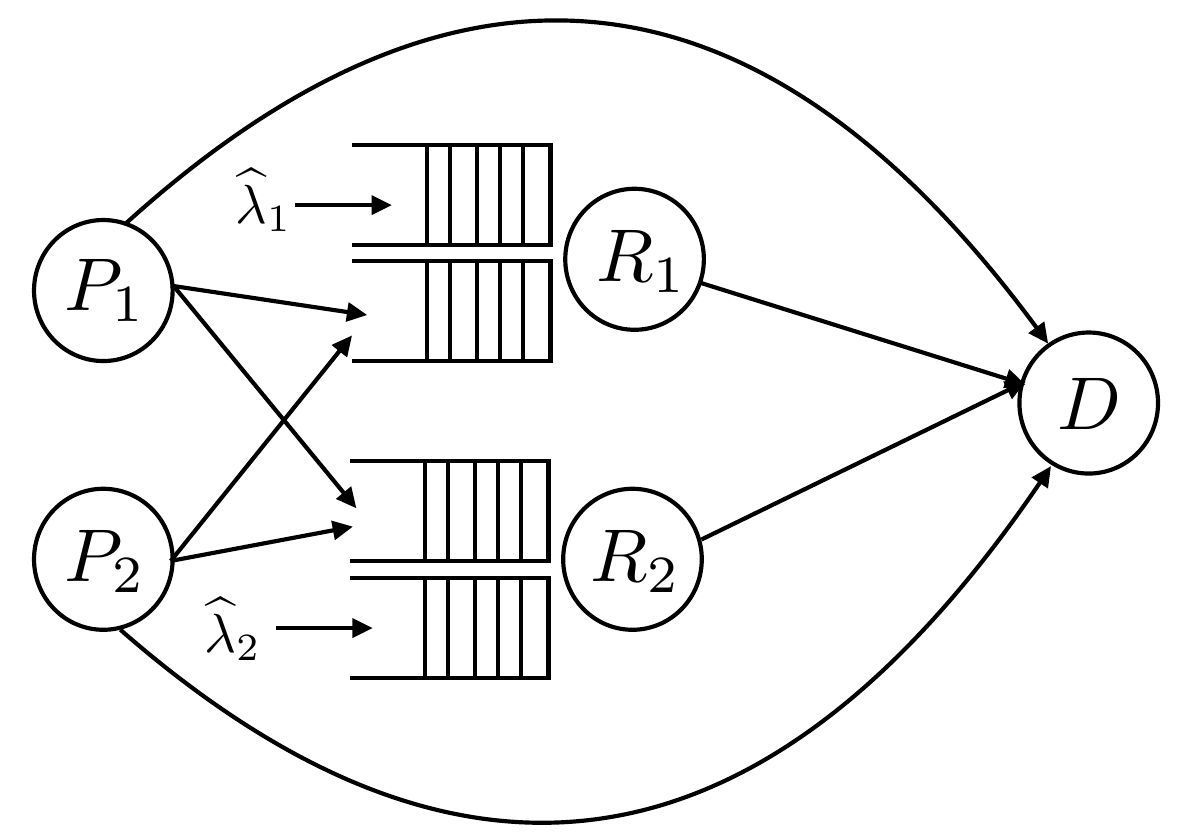}
\caption{An instance of the two-relay cooperative wireless network with two sources. In addition, the relays $R_1$ and $R_2$ have their own traffic $\widehat{\lambda}_{1}$ and $\widehat{\lambda}_{2}$ respectively, and assist the sources $P_1$ and $P_2$ by forwarding their failed packets to the destination $D$. The relays are assumed to have infinite capacity buffers.}
\label{mood}
\end{figure} 

\subsection{Description of Relay Cooperation}\label{relaycoop}
In the following, we describe the cooperation mechanism among sources and relays. As already stated, the relays overhear the direct transmission the sources to the destination and if it fails, they can store the failed packet in their queues with a specific probability, and try to forward it to the destination at a subsequent time slot. The source-relay cooperation policy is described in the following:
Denote by $p_{a_{i,j}}$, the probability that a transmitted packet from the $i$-th source will be stored at the queue of $j$-th relay if the relay is able to decode it. This probability captures two scenarios: (i) The partial cooperation of a relay, see \cite{PappasISIT2012, PappasJCN2018}, in case only one relay will be able to receive the failed packet to $D$. (ii) The scenario when both relays receive the failed packet from source $P_i$. Then, $R_{1}$ will store it in its queue with probability $p_{a_{i,1}}$, while with probability $p_{a_{i,2}}=1-p_{a_{i,1}}$ the failed packet will be stored by $R_{2}$. In this work we employ the following policy.
\begin{enumerate}
\item In order to simplify the presentation, if only one relay receives successfully a packet that fails to reach the destination, then this packet will be stored in its queue with probability 1\footnote{We would like to emphasize that in general, in case only one relay, say $R_{1}$, receives correctly a failed packet, then it will store it in its queue with probability $p_{a_{i,1}}$. This probability, controls the amount of the cooperation that this relay provides. We omit this case here for the sake of presentation, however it can be handled by our analysis trivially.}.
\item If both relays decode correctly a failed packet, then, we have the following scenarios: (i) if $p_{a_{i,1}}+p_{a_{i,2}}=1$, then the packet enters randomly either $R_{1}$, or $R_{2}$. (ii) If $p_{a_{i,1}}+p_{a_{i,2}}<1$, then there is a probability that the failed packet will not be accepted in the queues of the relays and it has to be retransmitted in a future timeslot by its source.
\end{enumerate}

\subsection{Physical Layer Model} \label{sec:PHY}
The MPR channel model used in this work is a generalized form of the packet erasure model. In wireless networks, a transmission over a link is successful with a probability. We denote $P_s(i,k,A)$ the success probability of the link between nodes $i$ and $k$ when the set of active transmitters are in $A$. For example, $\PsoneRoneonetwo$ denotes the success probability for the link between the first source and the first relay when both sources are transmitting. The probability that the transmission fails is denoted by $\PfoneRoneonetwo$. In order to take into account the interference among the relays, we have to distinguish the success probabilities when a relay transmits and the other is active or inactive (i.e., it is empty). 

Thus, when $i\in\{R_{1},R_{2}\}$, the success probability of the link between relay node $i$ and node $D$ when relay node $i$ is the only non empty is denoted by $P_s^{*}(i,D,\{i\})$. We distinguish this case, in order to have more general results that capture scenarios that one relay can increase its transmission power when the other relay is empty, thus remains silent, in order to achieve a higher success probability. Therefore we have $\PssRoneDRone\geq\PsRoneDRone$. The probabilities of successful packet reception can be obtained using the common assumption in wireless networks that a packet can be decoded correctly by the receiver if the received SINR (Signal-to-Interference-plus-Noise-Ratio) exceeds a certain threshold. The SINR depends on the modulation scheme, the target bit error rate and the number of bits in the packet \cite{tse} and the expressions for the success probabilities can be found in several papers, i.e for the case of Raleigh fading refer in \cite{PappasTWC2015}. On the other hand, if source source $P_{k},\,k=1,2$, is the only that transmits, $\PskDk$ denotes the probability that its packet is successfully decoded by the destination, while with probability $\PfkDk=1-\PskDk$ this transmission fails. 

We assume that the receivers, both at relays and at the node $D$, transmit an instantaneous error-free feedback (ACK) for all the packets that were successfully received in a slot. In case a relay receives a positive ACK from node $D$, it removes the successfully transmitted packet from its buffer. The packets that were not successfully transmitted are retained.

We now provide the service rates $\mu_{1}$, $\mu_{2}$ seen at relay nodes, given by
\begin{equation} \label{eq:mu1}
\begin{array}{rl}
\mu_1 = &\overline{t}_1 \overline{t}_2 \left[ \mathrm{Pr} (N_2 = 0) \alpha_{1}^{*} \PssRoneDRone \right.\\&\left.+ \mathrm{Pr} (N_2 > 0) \alpha_{1} \left( \alpha_{2} \PsRoneDRoneRtwo + \overline{\alpha}_{2} \PsRoneDRone \right) \right],\\
\mu_2 = &\overline{t}_1 \overline{t}_2 \left[ \mathrm{Pr} (N_1 = 0) \alpha_{2}^{*} \PssRtwoDRtwo \right.\\&\left.+ \mathrm{Pr} (N_1 > 0) \alpha_{2} \left( \alpha_{1} \PsRtwoDRoneRtwo + \overline{\alpha}_{1} \PsRtwoDRtwo \right) \right].
\end{array}
\end{equation}

Note that the success probability $\PsRoneDRoneRtwo$ (resp. $\PsRtwoDRoneRtwo$) refers to the case where a submitted packet from relay $R_{1}$ (resp. $R_{2}$) is successfully decoded by node $D$, and includes both the case where only a packet from $R_{1}$ (resp. $R_{2}$) is decoded, both the case where both relays have successful transmissions (i.e. MPR case). To simplify the presentation we define the following two variables
\begin{displaymath}
\begin{array}{rl}
\Delta_1=\PsRoneDRoneRtwo -\PsRoneDRone,\\
\Delta_2=\PsRtwoDRoneRtwo -\PsRtwoDRtwo.
\end{array}
\end{displaymath} 
These variables can be seen as an indication regarding the MPR capability for each relay. If $\Delta_i \to 0$, then the interference caused by the other relay is negligible.

\begin{table} [ht]
\caption{Basic Notation}
\label{table:notation}
	\begin{tabular}{| c | l | }
		\hline
		\bfseries Symbol & \bfseries Explanation\\
	\hline
		$N_{i,n}$ & Queue size at $R_{i}$ at the beginning of slot $n$.\\
		\hline$A_{i,n}$ &Internal packet arrivals at $R_{i}$, during $(n,n+1]$. \\
	\hline	$\widehat{\lambda}_{i}$ &Average external arrival rate at $R_{i}$, $i=1,2$.\\
		\hline$t_{k}$ &Transmission probability of $P_k$, $k=1,2$\\
\hline$a_{i}$ &Transmission probability of $R_{i}$ \\ &when both relays are non-empty.\\ 
		\hline$a^{*}_{i}$ & Transmission probability of $R_{i}$, $i=1,2$, \\&when it is the only non-empty relay.\\
	\hline	$P_{s}(k,m,A)$& Success probability of the link between nodes\\&$k$, $m$ given the set of transmitting nodes $A$.\\
		 \hline$P_{s}^{*}(R_{i},D,\{R_{i}\})$&Success probability of $R_{i}$, $i=1,2$, when \\& it is the only active (i.e., non-empty).\\
		\hline$P_{s,k}(k,R_{i},\{1,2\})$& Success probability of the link between \\& $P_k$-$R_{i}$ when both sources transmit,\\&  but source $k$ is the only successful.\\
		\hline
	\end{tabular}	
\end{table}

\section{Throughput and Stability Analysis for the two-source case -- General MPR case} \label{stab-anal-2}

In the following, we provide the throughput analysis for the two-sources case, and derive the stability conditions for the queues at the relays. Following~\cite{szpa}, our two-dimensional Markovian process $\mathbf{N}_{n}=\{(N_{1,n},N_{2,n});n\in\mathbb{N}\}$ is \emph{stable} if
$\lim_{n \rightarrow \infty} {Pr}[\mathbf{N}_{n}< \mathbf{x}] = F(\mathbf{x})$ and $\lim_{ \mathbf{x} \rightarrow \infty} F(\mathbf{x}) = 1$, where by $\mathbf{x}\to\infty$ means that $x_{i}\to\infty$, $i=1,2$, and $F(\mathbf{x})$ is the limiting distribution function. 

Loynes' theorem~\cite{Loynes} states that if the arrival and service processes of a queue are strictly jointly stationary and the average arrival rate is less than the average service rate, then the queue is stable. If the average arrival rate is greater than the average service rate, then the queue is unstable and the value of $N_{i,n}$ approaches infinity almost surely.
We proceed with the derivation of the throughput per source, which allow us to calculate the endogenous arrivals at the relays.

\subsection{Throughput per source}

Hereon we consider the throughput per source when both queues of the relays are stable. Conditions for stability are given in a subsequent subsection. When the queues at the relays are not stable the throughput per source can be obtained using the approach in \cite{PappasTWC2015}.

The throughput per source $P_k$, say $T_k$, equals the direct throughput when the transmission to the destination is successful, plus the throughput contributed by the relays (if they can decode the transmission) in case of a failed direct transmission to the destination. Thus, the throughput seen by $P_{1}$ is given by
\begin{displaymath}
T_1=T_{1,D}+T_{1,R},
\end{displaymath}
where $T_{1,D} = t_1 \bar{t}_2  \PsoneDone + t_1 t_2 \PsoneDonetwo$,  
and
\begin{equation}
\begin{array}{rl}
T_{1,R}=& t_1\bar{t}_{2}[\PfoneDone \PsoneRoneone \PfoneRtwoone\\&+\PfoneDone \PfoneRoneone \PsoneRtwoone\\
&+ \PfoneDone \PsoneRoneone \PsoneRtwoone]\\&+t_1 t_2 [\PfoneDonetwo \PsoneRoneonetwo \PfoneRtwoonetwo \\&+ \PfoneDonetwo \PfoneRoneonetwo \PsoneRtwoonetwo\\& +\PfoneDonetwo \PsoneRoneonetwo \PsoneRtwoonetwo].
\end{array}\label{eq:t1r}
\end{equation}
Similarly, we can obtain the throughout for source $P_{2}$. The aggregate, or network-wide throughput of the network when the queues at the relays are both stable is 
\begin{equation}
T_{aggr}=T_1+T_2+\widehat{\lambda}_{1}+\widehat{\lambda}_{2}.
\end{equation}

\subsection{Endogenous arrivals at the relays}

We now derive the internal (or endogenous) arrival rate from the sources to each relay. We would like to mention that the relays have also their own traffic (exogenous). Denote by $\widehat{\lambda}_{i}$, the average exogeneous arrival rate at relay $R_i$. 

A packet from a source is stored at the queue of a relay if the direct transmission to the node $D$ fails, and at the same time at least one relay decodes correctly the packet.
In the two-source case, with MPR capabilities at relays, each relay can receive up to two packets at each slot.

Denote by $\lambda_{i,j}$ the endogenous arrival (rate) probability from $i$-th source, $i=1,2$, to the queue at the $j$-th relay, $j=1,2$. Then,
\begin{equation}
\begin{array}{rl}
\lambda_{1,1} =& t_1\bar{t}_{2}[\PfoneDone \PsoneRoneone \PfoneRtwoone+\\&\PfoneDone \PsoneRoneone \PsoneRtwoone p_{a_{1,1}}] \\
&+t_1 t_2[\PfoneDonetwo \PsoneRoneonetwo \PfoneRtwoonetwo \\&+ \PfoneDonetwo \PsoneRoneonetwo \PsoneRtwoonetwo p_{a_{1,1}}],\vspace{2mm}\\
\lambda_{1,2} = &t_1 (1-t_2) [\PfoneDone \PfoneRoneone \PsoneRtwoone\\
&+\PfoneDone \PsoneRoneone \PsoneRtwoone p_{a_{1,2}}]\\
&+t_1 t_2[\PfoneDonetwo \PfoneRoneonetwo \PsoneRtwoonetwo\\&+\PfoneDonetwo \PsoneRoneonetwo \PsoneRtwoonetwo p_{a_{1,2}}].
\end{array}
\label{lambda11}
\end{equation}

Similarly we can define $\lambda_{2,1}$, and $\lambda_{2,2}$. Note that $T_{1,R} = \lambda_{1,1} + \lambda_{1,2}$, which is the relayed throughput for the source $P_{1}$ defined in (\ref{eq:t1r}).

The average arrival rate at the relay $i$ is given by
\begin{displaymath}
\lambda_{i} = \widehat{\lambda}_{i} + \lambda_{1,i} + \lambda_{2,i}.
\end{displaymath}

Recall that $p_{a_{1,1}}$ (resp. $p_{a_{1,2}}=1-p_{a_{1,1}}$) denotes the probability that the transmitted packet by the source $P_1$, which fails to reach the node $D$, and is correctly received by both relays, is finally stored at the queue of relay $R_{1}$ (resp. $R_{2}$). Thus, a packet is stored only in one queue, so we avoid wasting resources by transmitting the same packet twice. 

\subsection{Stability conditions for the queues at the relays}
We now proceed with the investigation of the stability conditions. 
The stability region of the system is defined as the set of arrival rate vectors $\boldsymbol{\lambda}=(\lambda_1, \lambda_2)$, for which the queues of the relay nodes are stable. Here, we will derive the stability analysis for the total average arrival rate at each relay, $\lambda_{i}$. 

The next theorem provides the stability criteria for the two-sources network under general MPR. Its proof is based on the principle of dominant systems developed in \cite{Rao_TIT1988,szpa}, and its main idea is to study whether the system of interest is stochastically comparable to
a simpler system that is easier to analyze; see also \cite{borst}. However, the same result can be also obtained by considering the two-dimensional Markov chain that describes our system and using the well-known Foster-Lyapunov drift criteria in \cite{fay,fay2}.
\begin{theorem}\label{Thm2users}
The stability region $\mathcal{R}$ is given by $\mathcal{R}=\mathcal{R}_1 \bigcup \mathcal{R}_2$ where
\begin{equation}
\begin{array}{ll}
\mathcal{R}_1=&\left\{ (\lambda_{1},\lambda_{2}):\lambda_{1} < \overline{t}_1 \overline{t}_2 \alpha_{1}^{*} \PssRoneDRone\right.\vspace{2mm}\\&\left. - \frac{\lambda_{2} \left[\alpha_{1}^*\PssRoneDRone - \alpha_{1} \left[\Btwo\right]\right]}{\Bone}, \right. \vspace{2mm} \\ &
\left. \lambda_2 < \overline{t}_1 \overline{t}_2 \alpha_{2} \left[  \alpha_{1} \Delta_2+ \PsRtwoDRtwo \right]
\right\rbrace,
\end{array}
\label{eq:R1_2}
\end{equation} 
and
\begin{equation}
\begin{array}{rl}
\mathcal{R}_2=& \left\lbrace (\lambda_{1},\lambda_{2}):\lambda_{2} < \overline{t}_1 \overline{t}_2 \alpha_{2}^{*} \PssRtwoDRtwo\right.\vspace{2mm}\\&\left. - \frac{\lambda_{1} \left[\alpha_{2}^*\PssRtwoDRtwo - \alpha_{2} \left[\Bfour\right]\right]}{\Bthree}, \right.  \vspace{2mm}\\& 
\left. \lambda_1 < \overline{t}_1 \overline{t}_2 \alpha_{1} \left[  \alpha_{2} \Delta_1+ \PsRoneDRone \right]
\right\rbrace.
\end{array}
\label{eq:R2_2}
\end{equation}
\end{theorem}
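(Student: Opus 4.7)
The plan is to apply the stochastic-dominance technique of Rao--Ephremides and Szpankowski \cite{Rao_TIT1988,szpa}, which decouples the two interacting relay queues whose service rates in \eqref{eq:mu1}--\eqref{eq:mu2} depend on the neighbour's emptiness. I would introduce two dominant systems $S_1$ and $S_2$, coupled slot-by-slot with the original on all arrivals, channel outcomes and transmission coins. In $S_1$, the relay $R_1$ transmits a ``dummy'' packet in every slot in which its buffer is empty, so from $R_2$'s viewpoint $R_1$ is always non-empty and $R_2$ therefore always uses probability $\alpha_2$; a sample-path argument then gives $N_{i,n}^{S_1}\ge N_{i,n}$ for $i=1,2$, so stability of $S_1$ implies stability of the original.

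In $S_1$ the analysis splits into two Loynes applications. First, because $R_1$ is never seen as empty, $R_2$'s per-slot service probability collapses to $\bar t_1\bar t_2\alpha_2\bigl(\alpha_1\PsRtwoDRoneRtwo+\bar\alpha_1\PsRtwoDRtwo\bigr)=\bar t_1\bar t_2\,\Bone$, which is i.i.d.\ and independent of $R_1$'s queue; since the sources are saturated and the channel coins are jointly stationary, Loynes' theorem yields the second inequality of \eqref{eq:R1_2}. Second, under that inequality $R_2$ is ergodic and flow conservation gives $\Pr(N_2=0)=1-\lambda_2/(\bar t_1\bar t_2\,\Bone)$. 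Inserting this into \eqref{eq:mu1} and collapsing the resulting expression via \eqref{eq:Delta1} and the shorthand $\Btwo$ produces exactly the right-hand side appearing in the first inequality of \eqref{eq:R1_2}, and a second application of Loynes to $R_1$ closes $\mathcal{R}_1$. The symmetric construction $S_2$, in which $R_2$ is the one emitting dummies, reproduces $\mathcal{R}_2$ by interchanging the subscripts.

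To upgrade dominance to equality of stability regions I would invoke the standard indistinguishability argument: if $(\lambda_1,\lambda_2)\in\mathcal{R}_1$ but $R_1$ were unstable in the original, then $R_1$ would be non-empty in all but finitely many slots almost surely, so the dummy transmissions in $S_1$ are exercised only finitely often and the two systems become path-wise identical from some slot onward; hence their stability boundaries coincide inside $\mathcal{R}_1$, and symmetrically inside $\mathcal{R}_2$. Any point outside $\mathcal{R}_1\cup\mathcal{R}_2$ violates at least one of the Loynes necessary conditions in both dominant systems, which provides the converse. The principal obstacle I anticipate is the careful handling of the adaptive probabilities $\alpha_i^{*}$ versus $\alpha_i$ in the coupling: the dummy-packet modification in $S_1$ must preserve $R_1$'s own adaptive rule (it should still use $\alpha_1^{*}$ whenever $R_2$ is genuinely empty in $S_1$) while forcing $R_2$ to see $R_1$ as perennially non-empty, and one needs a small case analysis to verify that the resulting asymmetric modification preserves sample-path dominance simultaneously for $i=1,2$. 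Once this is pinned down, the remainder reduces to routine bookkeeping with the definitions \eqref{eq:Delta1}--\eqref{eq:Delta2}.
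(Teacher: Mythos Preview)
Your proposal is correct and follows essentially the same route as the paper's own proof: two stochastic-dominance systems in the sense of \cite{Rao_TIT1988}, Loynes' criterion applied first to the decoupled relay and then---after computing $\Pr(N_j=0)$ via flow balance---to the other, followed by the indistinguishability argument for necessity. The subtlety you flag about preserving $R_1$'s adaptive rule on $\alpha_1^{*}$ while forcing $R_2$ to always see $R_1$ as non-empty is glossed over in the paper, but your handling of it is the right one and does not change the structure of the argument.
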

\begin{proof}
See \ref{proofThm2users}

\end{proof}

The stability region obtained in Theorem \ref{Thm2users} is depicted in Fig. \ref{region2}. To simplify presentation, we denote the points $A_1, A_2, B_1, B_2$ with the following expressions
\begin{eqnarray*}
A_1=\overline{t}_1 \overline{t}_2 \alpha_{1}^{*} \PssRoneDRone, \text{ }A_2=\overline{t}_1 \overline{t}_2 \alpha_{1} \left[\Btwo\right],
\\
B_1=\overline{t}_1 \overline{t}_2 \alpha_{2}^{*} \PssRtwoDRtwo, \text{ }
B_2=\overline{t}_1 \overline{t}_2  \alpha_{2} \left[\Bfour\right].
\end{eqnarray*}
\begin{figure}[!ht]
\centering
\includegraphics[scale=0.6]{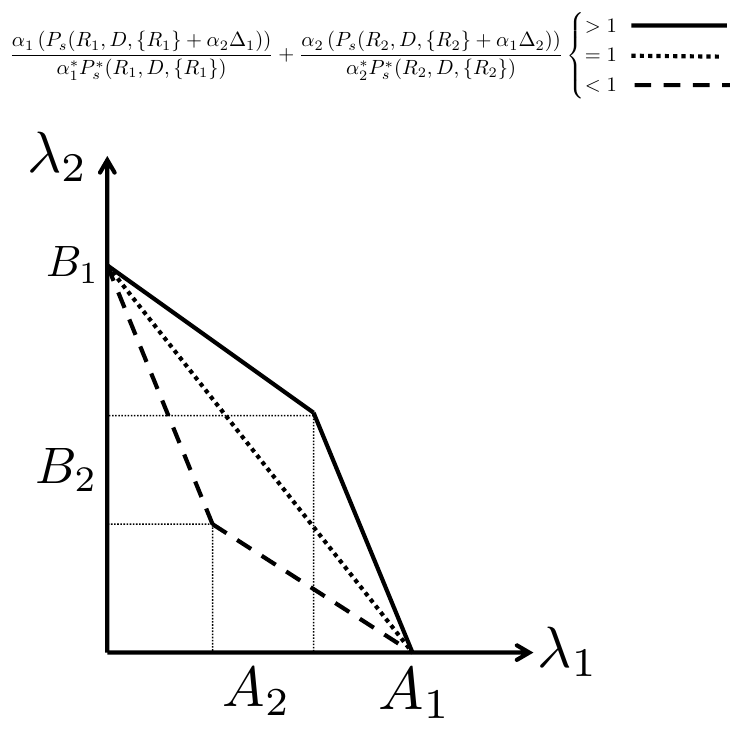}
\caption{The stability region described in Theorem \ref{Thm2users}.}
\label{region2}
\end{figure} 

\begin{remark}
The stability region is a convex polyhedron when the condition 
$\frac{\alpha_1\left(P_{s}(R_1,D,\{R_1\}+\alpha_2\Delta_1) \right)}{\alpha_{1}^{*}P_{s}^{*}(R_1,D,\{R_1\})}+\frac{\alpha_2\left(P_{s}(R_2,D,\{R_2\}+\alpha_1\Delta_2) \right)}{\alpha_{2}^{*}P_{s}^{*}(R_2,D,\{R_2\})}\geq1$ holds. 
In the previous condition, when equality holds, the region becomes a triangle and coincides with the case of time-sharing of the channel between the relays. Convexity is an important property corresponding to the case when parallel concurrent transmissions are preferable to a time-sharing scheme. Additionally, convexity of the stability region implies that if two rate pairs are stable, then any rate pair lying on the line segment joining those two rate pairs is also stable.
\end{remark}
\begin{remark}
The network without relay's assistance can be obtained by $p_{a_{1,2}}=p_{a_{1,1}}=0$. In this case, we have a network with saturated sources and also two relays with bursty traffic that transmit packets only when the saturated sources remain silent.
\end{remark}
\begin{remark}
One can connect the endogenous arrivals from the sources to the relays with the stability conditions, obtained in Theorem \ref{Thm2users}, by replacing the relevant expressions of $\widehat{\lambda}_{1}$ and $\widehat{\lambda}_{2}$ into $\lambda_{1}$ and $\lambda_{2}$.
\end{remark}

\begin{remark}
A slightly different scenario is captured by the case where the relays can transmit in a different channel than the sources and the destination can hear both channels at the same time. The receivers at the relays are operating at the same channels where the sources are transmitting. In this case, we can have a full duplex operation at the relays on different bands. Thus, we have the following average service rates for the relays
\begin{displaymath}
\begin{array}{rl}
\mu_1 =&  \mathrm{Pr} (N_2 = 0) \alpha_{1}^{*} \PsRoneDRone \\&+ \mathrm{Pr} (N_2 > 0) \alpha_{1} \left( \alpha_{2} \PsRoneDRoneRtwo + \overline{\alpha}_{2} \PsRoneDRone \right),\\
\mu_2 =&\mathrm{Pr} (N_1 = 0) \alpha_{2}^{*} \PsRtwoDRtwo \\&+ \mathrm{Pr} (N_1 > 0) \alpha_{2} \left( \alpha_{1} \PsRtwoDRoneRtwo + \overline{\alpha}_{1} \PsRtwoDRtwo \right).
\end{array}
\end{displaymath}

The stability analysis for this case can be trivially obtained by the presented analysis thus, it is omitted. However, this scenario has applicability in nowadays relay-assisted networks.
\end{remark}
\section{Throughput and Stability Analysis -- The symmetric $N$-sources case for the General MPR case} \label{stab-anal-n}
Here we will generalize the analysis provided in the previous section for the $N$-sources case. Due to presentation clarity, we focus only on the symmetric case, under which, the sources attempt to transmit with probability $t$. Moreover, the success probability of a source's transmission to the destination is the same for all the sources. Thus, in order to characterize it we just need the number of active sources, i.e. the interference. This probability is denoted by $P_s(D,i)$ to capture the case that $i$ sources are attempting transmission (including the source we intend to study its performance), similarly we define $P_s(R_j,i), j=1,2$. 

\subsection{Endogenous arrivals at the relays and throughput performance}
The direct throughput of a source to the destination in the case of $N$ symmetric sources is given by
\begin{displaymath}
T_D=\sum_{i=1}^{N} {N \choose i-1} t^i (1-t)^{N-i} P_s(D,i).
\end{displaymath}

In order to calculate the relayed throughput in the network, we have to derive the endogenous arrivals at each relay node. For the symmetric $N$-source case, we denote by $\lambda_{j,s}$ the endogenous arrivals from the sources to the relay $R_{j}$, $j=1,2$. We need further to characterize the average number of packet arrivals from the users at each relay. Denote by $r_{k,j}$ the probability that $k$ packets will arrive in a timeslot at $R_{j}$, $j=1,2$. Then, for $1 \leq k \leq N$
\begin{displaymath}
\begin{array}{rl}
r_{k,1}=&\sum_{i=k}^{N}\sum_{l=0}^{k}{N \choose i}{i \choose k}{k \choose l} {t^{i}\overline{t}^{N-i}} \left(P_s(R_1,i)\right)^{k} \left(\overline{P}_s(D,i)\right)^{k} \left(\overline{P}_s(R_2,i)\right)^{k-l}\\
&\times\left(P_s(R_2,i)\right)^{l} p_{a_1}^{l}\left[1-P_s(R_1,i)\overline{P}_s(D,i)\right]^{i-k}.
\end{array}
\end{displaymath}Similarly, we obtain $r_{k,2}$ for the second relay. Thus,
\begin{equation} \label{eq:lambdaju}
\begin{array}{c}
\lambda_{j,s} = \sum_{k=1}^{N} k r_{k,j}, \text{ }j=1,2.
\end{array}
\end{equation}

Note that the \emph{network-wide relayed throughput} when both relays are stable is given by $\lambda_{1,s}+\lambda_{2,s}$.
Thus, the aggregate or network-wide throughput of the network when both relays are stable is given by

\begin{displaymath}
T_{aggr}=N T_D + \lambda_{1,s}+\lambda_{2,s} + \widehat{\lambda}_{1}+\widehat{\lambda}_{2}.
\end{displaymath}

Recall that the total arrival rate at relay $R_{j}$ is $\lambda_{j}=\lambda_{j,s}+\widehat{\lambda}_{j}$, $j=1,2$, consisting of the endogenous arrivals from the sources and the external traffic.
\subsection{Stability conditions for the queues at the relays}
The service rates at the at the first and second relay are given by
\begin{displaymath}
\begin{array}{rl}
\mu_1 =& \overline{t}^{N} \left[ \mathrm{Pr} (N_2 = 0) \alpha_{1}^{*} \PssRoneDRone \right.\\& \left.+\mathrm{Pr} (N_2 > 0) \alpha_{1} \left( \alpha_{2} \PsRoneDRoneRtwo + \overline{\alpha}_{2} \PsRoneDRone \right) \right],\\
\mu_2 =& \overline{t}^{N} \left[ \mathrm{Pr} (N_1 = 0) \alpha_{2}^{*} \PssRtwoDRtwo \right.\\&\left.+ \mathrm{Pr} (N_1 > 0) \alpha_{2} \left( \alpha_{1} \PsRtwoDRoneRtwo + \overline{\alpha}_{1} \PsRtwoDRtwo \right) \right].
\end{array}
\end{displaymath}

Following the same methodology as in the proof of Theorem \ref{Thm2users}, we obtain the stability conditions for the symmetric $N$-sources case. The stability conditions are given by $\mathcal{R}=\mathcal{R}_1 \cup \mathcal{R}_2$ where

\begin{displaymath}
\begin{array}{rl}
\mathcal{R}_1=&\left\{ (\lambda_{1},\lambda_{2}): \lambda_{1} < \overline{t}^N \alpha_{1}^{*} \PssRoneDRone\vspace{2mm}\right.\\&\left. - \frac{\lambda_{2} \left[\alpha_{1}^*\PssRoneDRone - \alpha_{1} \left[\Btwo\right]\right]}{\Bone}, \right. \notag \\ &
\left. \lambda_2 < \overline{t}^N \alpha_{2} \left[  \alpha_{1} \Delta_2+ \PsRtwoDRtwo \right]
\right\rbrace,\vspace{2mm}\\
\mathcal{R}_2=& \left\lbrace (\lambda_{1},\lambda_{2}):\lambda_{2} < \overline{t}^N \alpha_{2}^{*} \PssRtwoDRtwo\vspace{2mm}\right.\\&\left. - \frac{\lambda_{1} \left[\alpha_{2}^*\PssRtwoDRtwo - \alpha_{2} \left[\Bfour\right]\right]}{\Bthree}, \right. \notag \\ &
\left. \lambda_1 < \overline{t}^N \alpha_{1} \left[  \alpha_{2} \Delta_1+ \PsRoneDRone \right]
\right\rbrace.
\end{array}
\end{displaymath}

\section{Delay analysis: The two-sources case}\label{anal}
This section is devoted to the analysis of the queueing delay experienced at the relays. Our aim is to obtain the generating function of the joint stationary distribution of the number of packets at relay nodes. In the following we consider the case of $N=2$ sources, and focus on a subclass of MPR models, called the ``capture" channel, under which at most one packet can be successfully decoded by the receiver of the node $D$, even if more than one nodes transmit\footnote{Recall that the assumption of two sources is not restrictive, and our analysis can be extended to the general case of $N$ sources. Furthermore, our analysis remains valid even for the case of general MPR model. However, both cases require some additional essential technical results, which in turn will worse the readability of the paper and further increase its length. Besides, recent advances in LoRaWAN \cite{bankov} reveals the applicability of the capture channel model.}.

In order to proceed, we have to provide more information regarding the success probabilities of a transmission between nodes that were defined in subsection \ref{sec:PHY}. More precisely, we have to take into account the number as well as the type of nodes that transmit (i.e., a source or a relay node). This is due to several reasons, such as the fact that generally the channel quality between relay nodes and destination is usually better than the channel between sources and destination, as well as due to the wireless interference, since the channel quality is severely affected by the the number of nodes that attempt to transmit. Moreover, it is crucial to take into account the possibility that a failed packet can be successfully decoded by both relays, as well as the ability of our ``smart" relay nodes to be aware of the status of the others, which in turn leads to self-aware networks. With that in mind we consider the following cases:
\begin{enumerate}
\item Both sources transmit
\begin{enumerate}
\item When both sources \textit{fail} to transmit directly to the node $D$, the failed packet of source $k$ is successfully decoded by relay $R_{i}$ with probability $\PskRionetwo$, $k=1,2$, $i=1,2$, where with probability $\PfRionetwo$, the relay $R_{i}$ failed to decode both packets. Note also that $\overline{P}_{s}(1,R_{i},\{1,2\})=\overline{P}_{s}(0,R_{i},\{1,2\})+P_{s}(2,R_{i},\{1,2\})$, $\overline{P}_{s}(2,R_{i},\{1,2\})=\overline{P}_{s}(0,R_{i},\{1,2\})+P_{s}(1,R_{i},\{1,2\})$, $i=1,2$. Due to the total probability law we have 
\begin{displaymath}
\begin{array}{l}
(\PfRoneonetwo+\PsoneRoneonetwo+\PstwoRoneonetwo)\\\times(\PfRtwoonetwo+\PsoneRtwoonetwo+\PstwoRtwoonetwo)=1.
\end{array}
\end{displaymath}
\item When source 1 (resp. source 2) is the only that succeeds to transmit a packet at node $D$, i.e., its transmission was successfully decoded by node $D$, then with probability $P_{s,2}(2,R_{i},\{1,2\})$ (resp. $P_{s,1}(1,R_{i},\{1,2\})$), $i=1,2$, the failed packet of source $2$ (resp. source $1$) is successfully decoded by the relay $R_{i}$. On the contrary, with probability $\overline{P}_{s,2}(2,R_{i},\{1,2\})$ (resp. $\overline{P}_{s,1}(1,R_{i},\{1,2\})$), the relay $R_{i}$ failed to decode the packet from source $2$ (resp. source $1$), and thus, the packet must be retransmitted by its source. Due to the total probability law we have,
\begin{displaymath}
\begin{array}{rl}
(P_{s,2}(2,R_{1},\{1,2\})+\overline{P}_{s,2}(2,R_{1},\{1,2\}))\\\times(P_{s,2}(2,R_{2},\{1,2\})+\overline{P}_{s,2}(2,R_{2},\{1,2\}))&=1,\\
(P_{s,1}(1,R_{1},\{1,2\})+\overline{P}_{s,1}(1,R_{1},\{1,2\}))\\\times(P_{s,1}(1,R_{2},\{1,2\})+\overline{P}_{s,1}(1,R_{2},\{1,2\}))&=1.
\end{array}
\end{displaymath}
\end{enumerate}
\item Only one source transmit, say source $k$, and the other remains silent. When source $k$ fails to transmit directly to node $D$, its failed packet is successfully decoded by relay $R_{i}$ with probability $\PskRik$, $k=1,2$, $i=1,2$, where with probability $\PfRik$, the relay $R_{i}$ fails to decode the packet.
Due to the total probability law we have 
\begin{displaymath}
(\PfRonek+\PskRonek)(\PfRtwok+\PskRtwok)=1.
\end{displaymath}
\end{enumerate}
Note that the cases $1. b)$ and $2.$ refer to the case where only one source is able to cooperate with relays. We distinguished these two cases because in the former one, there is an interaction among sources since both of them transmit, while in the latter one, only one source transmit and the other remains silent (i.e., there is no interaction). Such an interaction, plays a crucial role on the values of the success probabilities. In wireless systems, the interference and interaction among transmitting nodes is of great importance and have to be taken into account.

If both relays transmit simultaneously, with probability $\PsRiDRoneRtwo$, the packet transmitted from $R_{i}$ is successfully received by node $D$, while with probability $\PfRiDRoneRtwo=1-\sum_{i=1,2}\PsRiDRoneRtwo$, both of them failed to be received by the node $D$, and have to be retransmitted in a later time slot. Recall also the success probabilities $\PssRiDRi$, $\PsRiDRi$ of $R_i$ when the other relay node is active (i.e., non-empty), and inactive (i.e., empty) respectively. We assume that $\PssRiDRi>\PsRiDRi>\PsRiDRoneRtwo$. Denote the counter probabilities $\PfsRiDRi=1-\PssRiDRi$, $\PfRiDRi=1-\PsRiDRi$, $i=1,2$.

In the following, we proceed with the derivation of a functional equation, the solution of which, provides the generating function of the stationary joint queue length distribution at relay nodes. This result is the key element for obtaining expressions for the queueing delay at relay nodes.

\subsection{Functional equation and preparatory results}
Clearly, $\mathbf{N}_{n}=\{(N_{1,n},N_{2,n};n\in\mathbb{N})\}$ is a discrete time Markov chain with state space $\mathcal{S}=\{(k_{1},k_{2}):k_{1},k_{2}=0,1,2,...\}$. The queues of both relay nodes evolve as:
\begin{equation}
\begin{array}{c}
N_{i,n+1}=[N_{i,n}+F_{i,n}]^{+}+A_{i,n},\,i=1,2,
\end{array}
\label{x}
\end{equation}
where $F_{i,n}$ is either the number of arrivals (in such a case $F_{i,n}$ equals 0 or 1) at relay $R_{i}$, or the number of departures (in such a case $F_{i,n}$ equals $0$ or $-1$) from $R_i$ at time slot $n$. In the former case either both sources transmit simultaneously, and the unsuccessful packet is stored in $R_{i}$, or only a single source transmits, but its transmission was unsuccessful. The latter case occurs when the sources remain silent and $R_i$ attempts to transmit a packet at node $D$. 

Recall that the relays have also their own traffic, and $A_{i,n}$ represents the number of arrivals (of such generated traffic) in the the time interval $(n,n+1]$. Let $H(x,y)$ be the generating function of the joint stationary queue process and $Z(x,y)$ the generating function of the joint distribution of the number of arriving packets in any slot (i.e., self-generated traffic of the relays), viz.
\begin{displaymath}
\begin{array}{c}
H(x,y)=\lim_{n\to\infty}E(x^{N_{1,n}}y^{N_{2,n}}),\,|x|\leq1,|y|\leq1,\\
Z(x,y)=\lim_{n\to\infty}E(x^{A_{1,n}}y^{A_{2,n}}),\,|x|\leq1,|y|\leq1.
\end{array}
\end{displaymath}

In the following we assume for sake of convenience only a particular distribution for the self-generated arrival processes at both relays, namely the geometric distribution\footnote{Note that such a distribution is natural in radio-packet networks. However, our analysis remains valid, with some modifications, even for a more general arrival process.} \cite{nain}. We further assume that both arrival processes are independent. Thus,
\begin{displaymath}
\begin{array}{c}
Z(x,y)=[(1+\widehat{\lambda}_{1}(1-x))(1+\widehat{\lambda}_{2}(1-y))]^{-1}.
\end{array}
\end{displaymath}
By exploiting (\ref{x}), and using (\ref{fc}), we obtain after lengthy calculations 
\begin{equation}
\begin{array}{c}
R(x,y)H(x,y)=A(x,y)H(x,0)+B(x,y)H(0,y)+C(x,y)H(0,0),
\end{array}
\label{we}
\end{equation}
where,
\begin{displaymath}
\begin{array}{rl}
R(x,y)=&Z^{-1}(x,y)-1+\bar{t}_{1}\bar{t}_{2}[\alpha_{1}\widehat{\alpha}_{2}(1-\frac{1}{x})+\alpha_{2}\widehat{\alpha}_{1}(1-\frac{1}{y})]\\&+(1-x)L_{1}+(1-y)L_{2}+(1-xy)L_{3},
\end{array}
\end{displaymath}
and
\begin{displaymath}
\begin{array}{rl}
L_{1}=& t_{1}\bar{t}_{2}\PfoneDone\PfoneRtwoone\PsoneRoneone\\&+t_{2}\bar{t}_{1}\PftwoDtwo\PftwoRtwotwo \PstwoRonetwo\\&+t_{1}t_{2}[\PfDonetwo\PfRtwoonetwo(\PsoneRoneonetwo+\PstwoRoneonetwo)\\
&+\PsoneDonetwo\overline{P}_{s,2}(2,R_{2},\{1,2\})P_{s,2}(2,R_{1},\{1,2\})\\&+\PstwoDonetwo\overline{P}_{s,1}(1,R_{2},\{1,2\})P_{s,1}(1,R_{1},\{1,2\})],
\end{array}
\end{displaymath}
\begin{displaymath}
\begin{array}{rl}
L_{2}=& t_{1}\bar{t}_{2}\PfoneDone\PsoneRtwoone\PfoneRoneone\\&+t_{2}\bar{t}_{1}\PftwoDtwo\PstwoRtwotwo \PftwoRonetwo\\&+t_{1}t_{2}[\PfDonetwo\PfRoneonetwo(\PsoneRtwoonetwo+\PstwoRtwoonetwo)\\
&+\PsoneDonetwo\overline{P}_{s,2}(2,R_{1},\{1,2\})P_{s,2}(2,R_{2},\{1,2\})\\&+\PstwoDonetwo\overline{P}_{s,1}(1,R_{1},\{1,2\})P_{s,1}(1,R_{2},\{1,2\})],
\end{array}
\end{displaymath}

\begin{displaymath}
\begin{array}{rl}
L_{3}=& t_{1}\bar{t}_{2}\PfoneDone\PsoneRtwoone\PsoneRoneone\\&+t_{2}\bar{t}_{1}\PftwoDtwo\PstwoRtwotwo \PstwoRonetwo\\&+t_{1}t_{2}[\PfDonetwo(\PsoneRtwoonetwo+\PstwoRtwoonetwo)\\&\times(\PsoneRoneonetwo+\PstwoRoneonetwo)\\
&+\PsoneDonetwo P_{s,2}(2,R_{1},\{1,2\})P_{s,2}(2,R_{2},\{1,2\})\\&+\PstwoDonetwo P_{s,1}(1,R_{1},\{1,2\})P_{s,1}(1,R_{2},\{1,2\})],
\end{array}
\end{displaymath}

\begin{displaymath}
\begin{array}{rl}
A(x,y)=&\bar{t}_{1}\bar{t}_{2}[d_{1}(1-\frac{1}{x})+\alpha_{2}\widehat{\alpha}_{1}(1-\frac{1}{y})],\\
B(x,y)=&\bar{t}_{1}\bar{t}_{2}[d_{2}(1-\frac{1}{y})+\alpha_{1}\widehat{\alpha}_{2}(1-\frac{1}{x})],\\
C(x,y)=&\bar{t}_{1}\bar{t}_{2}[d_{1}(\frac{1}{x}-1)+d_{2}(\frac{1}{y}-1)],
\end{array}
\end{displaymath}
\begin{displaymath}
\begin{array}{rl}
\widehat{\alpha}_{i}=&\bar{\alpha}_{i}\PsRiDRi+\alpha_{i}\PsRiDRoneRtwo,\,i=1,2,\\
d_{1}=&\alpha_{1}\widehat{\alpha}_{2}-\alpha_{1}^{*}\PssRoneDRone,\\
d_{2}=&\alpha_{2}\widehat{\alpha}_{1}-\alpha_{2}^{*}\PssRtwoDRtwo.
\end{array}
\end{displaymath}
\begin{remark}
Note that $L_{i}$, $i=1,2,3$, has a clear probabilistic interpretation. Indeed, $L_{1}$ (resp. $L_{2}$) is the probability that a (failed) transmitted source packet will be decoded and stored at relay $R_{i}$. Moreover, $L_{3}$ is the probability that a failed transmitted source packet will be decoded and stored at both relays.
\end{remark}

Some interesting relations can be obtained directly from (\ref{we}). Taking $y = 1$, dividing by $x-1$ and taking $x\to 1$ in (\ref{we}) and vice versa yield the following ``conservation of flow" relations:
\begin{equation}
\lambda_{1}=\bar{t}_{1}\bar{t}_{2}\{\alpha_{1}\widehat{\alpha}_{2}(1-H(0,1))-d_{1}(H(1,0)-H(0,0))\},
\label{r1}
\end{equation}
\begin{equation}
\lambda_{2}=\bar{t}_{1}\bar{t}_{2}\{\alpha_{2}\widehat{\alpha}_{1}(1-H(1,0))-d_{2}(H(0,1)-H(0,0))\},
\label{r2}
\end{equation}
where for $i=1,2,$
\begin{displaymath}
\begin{array}{l}
\lambda_{i}=\widehat{\lambda}_{i}+\lambda_{1,i}+\lambda_{2,i},
\end{array}
\end{displaymath}
and,
\begin{displaymath}
\begin{array}{rl}
\lambda_{1,1}=&t_{1}\bar{t}_{2}\PfoneDone\PsoneRoneone(\PsoneRtwoone+\PfoneRtwoone)\\
&+t_{1}t_{2}[\PfDonetwo(\PfoneRtwoonetwo+\PsoneRtwoonetwo)\PsoneRoneonetwo\\
&+\PstwoDonetwo(\overline{P}_{s,1}(1,R_{2},\{1,2\})+P_{s,1}(1,R_{2},\{1,2\}))P_{s,1}(1,R_{1},\{1,2\})],\vspace{2mm}\\
\lambda_{2,1}=&t_{2}\bar{t}_{1}\PftwoDtwo\PstwoRonetwo(\PstwoRtwotwo+\PftwoRtwotwo)\\
&+t_{1}t_{2}[\PfDonetwo(\PftwoRtwoonetwo+\PstwoRtwoonetwo)\PstwoRoneonetwo\\
&+\PsoneDonetwo(\overline{P}_{s,2}(2,R_{2},\{1,2\})+P_{s,2}(2,R_{2},\{1,2\}))P_{s,2}(2,R_{1},\{1,2\})],
\end{array}
\end{displaymath}
\begin{displaymath}
\begin{array}{rl}
\lambda_{1,2}=&t_{1}\bar{t}_{2}\PfoneDone\PsoneRtwoone(\PsoneRoneone+\PfoneRoneone)\\
&+t_{1}t_{2}[\PfDonetwo(\PfoneRoneonetwo+\PsoneRoneonetwo)\PsoneRtwoonetwo\\
&+\PstwoDonetwo(\overline{P}_{s,1}(1,R_{1},\{1,2\})+P_{s,1}(1,R_{1},\{1,2\}))P_{s,1}(1,R_{2},\{1,2\})],\vspace{2mm}\\
\lambda_{2,2}=&t_{2}\bar{t}_{1}\PftwoDtwo\PstwoRtwotwo(\PstwoRonetwo+\PftwoRonetwo)\\
&+t_{1}t_{2}[\PfDonetwo(\PfoneRoneonetwo+\PsoneRoneonetwo)\PstwoRtwoonetwo\\
&+\PsoneDonetwo(\overline{P}_{s,2}(2,R_{1},\{1,2\})+P_{s,2}(2,R_{1},\{1,2\}))P_{s,2}(2,R_{2},\{1,2\})].
\end{array}
\end{displaymath}
From (\ref{r1}), (\ref{r2}) we realize that the analysis is distinguished in two cases:
\begin{enumerate}
\item For $\frac{\alpha_{1}\widehat{\alpha}_{2}}{\alpha_{1}^{*}\PssRoneDRone}+\frac{\alpha_{2}\widehat{\alpha}_{1}}{\alpha_{2}^{*}\PssRtwoDRtwo}=1$, (\ref{r1}), (\ref{r2}) yield
\begin{displaymath}
\begin{array}{c}
H(0,0)=1-\frac{\lambda_{1}}{\bar{t}_{1}\bar{t}_{2}\alpha_{1}^{*}\PssRoneDRone}-\frac{\lambda_{2}}{\bar{t}_{1}\bar{t}_{2}\alpha_{2}^{*}\PssRtwoDRtwo}=1-\rho.
\end{array}
\end{displaymath}
\item For $\frac{\alpha_{1}\widehat{\alpha}_{2}}{\alpha_{1}^{*}\PssRoneDRone}+\frac{\alpha_{2}\widehat{\alpha}_{1}}{\alpha_{2}^{*}\PssRtwoDRtwo}\neq1$, (\ref{r1}), (\ref{r2}) yield
\begin{equation}
\begin{array}{l}
H(1,0)=\frac{d_{2}\lambda_{1}+\alpha_{1}\widehat{\alpha}_{2}(\bar{t}_{1}\bar{t}_{2}\alpha_{2}^{*}\PssRtwoDRtwo-\lambda_{2})+d_{2}\alpha_{1}^{*}\PssRoneDRone H(0,0)}{\bar{t}_{1}\bar{t}_{2}(\alpha_{1}\widehat{\alpha}_{2}\alpha_{2}\widehat{\alpha}_{1}-d_{1}d_{2})},\vspace{2mm}\\
H(0,1)=\frac{d_{1}\lambda_{2}+\alpha_{2}\widehat{\alpha}_{1}(\bar{t}_{1}\bar{t}_{2}\alpha_{1}^{*}\PssRoneDRone-\lambda_{1})+d_{1}\alpha_{2}^{*}\PssRtwoDRtwo H(0,0)}{\bar{t}_{1}\bar{t}_{2}(\alpha_{1}\widehat{\alpha}_{2}\alpha_{2}\widehat{\alpha}_{1}-d_{1}d_{2})}.
\end{array}
\label{rd}
\end{equation}
\end{enumerate}

The key element to investigate the queueing delay at relay nodes is to solve the functional equation (\ref{we}) and obtain $H(x,y)$. The solution of (\ref{we}) requires first to obtain the boundary functions $H(x,0)$, $H(0,y)$ and the term $H(0,0)$. The theory of boundary value problems \cite{coh,fay} is our basic methodological tool to accomplish our goal. Since we are dealing with a quite technical approach we summarized in the following the basic steps.
\begin{enumerate}
\item[Step 1]Using (\ref{we}), we show that $H(x,0)$, $H(0,y)$ satisfy certain boundary value problems of Riemann-Hilbert-Carleman type \cite{fay}, with boundary conditions on closed curves. Lemma \ref{SQ} provides information about these curves. Its proof (see \ref{a1}) requires the investigation of the kernel $R(x,y)$ (see subsection \ref{ker}, and Lemmas \ref{LEM}, \ref{lem1}; the proof of Lemma \ref{LEM} is given in \ref{a0}). Note that based on the values of the system parameters, the unit disc may lie inside the region bounded by these contours. Clearly, the functions $H(x,0)$, $H(0,y)$ are analytic inside the unit disc, but they might have poles in the region bounded by the unit disc and these closed curves. The position of these poles (if exist) are investigated in \ref{ap4}. With that in mind, $H(x,0)$, $H(0,y)$ admit analytic continuations in the whole interiors of the curves; see also Chapter 3 in \cite{fay}. Then, we proceed with the derivation of the boundary conditions on these curves; see (\ref{p1}), (\ref{df3}) respectively.
\item[Step 2] The next step is to transform these problems into boundary value problems of Riemann-Hilbert type on the unit disc; see (\ref{zx}). This conversion is motivated by the fact that the latter problems are more usual and by far more treated in the literature \cite{coh}.
\item[Step 3] Finally, we solve these new problems by providing an integral expression of the unknown boundary functions; see (\ref{soll}) and (\ref{fin}).
\end{enumerate}
\subsection{Analysis of the kernel}\label{ker}
In the following we focus on the kernel $R(x,y)$, and provide some important properties for the following analysis. To the best of our knowledge, this type of kernel has never been treated in the related literature. Clearly,
\begin{displaymath}
R(x,y)=a(x)y^{2}+b(x)y+c(x)=\widehat{a}(y)x^{2}+\widehat{b}(y)x+\widehat{c}(y),
\end{displaymath}
where
\begin{displaymath}
\begin{array}{rl}
a(x)=&-x[\widehat{\lambda}_{2}(1+\widehat{\lambda}_{1}(1-x))+L_{2}+L_{3}x],\\
b(x)=&x[\widehat{\lambda}+\widehat{\lambda}_{1}\widehat{\lambda}_{2}+\bar{t}_{1}\bar{t}_{2}(\alpha_{1}\widehat{\alpha}_{2}+\alpha_{2}\widehat{\alpha}_{1})+L_{1}+L_{2}+L_{3}]-\bar{t}_{1}\bar{t}_{2}\alpha_{1}\widehat{\alpha}_{2}\\&-[\widehat{\lambda}_{1}(1+\widehat{\lambda}_{2})+L_{1}]x^{2},\\
c(x)=&-\bar{t}_{1}\bar{t}_{2}\alpha_{2}\widehat{\alpha}_{1}x,
\end{array}
\end{displaymath}
\begin{displaymath}
\begin{array}{rl}
\widehat{a}(y)=&-y[\widehat{\lambda}_{1}(1+\widehat{\lambda}_{2}(1-y))+L_{1}+L_{3}y],\\
\widehat{b}(y)=&y[\widehat{\lambda}+\widehat{\lambda}_{1}\widehat{\lambda}_{2}+\bar{t}_{1}\bar{t}_{2}(\alpha_{1}\widehat{\alpha}_{2}+\alpha_{2}\widehat{\alpha}_{1})+L_{1}+L_{2}+L_{3}]-\bar{t}_{1}\bar{t}_{2}\alpha_{2}\widehat{\alpha}_{1}\\&-[\widehat{\lambda}_{2}(1+\widehat{\lambda}_{1})+L_{2}]y^{2},\\
\widehat{c}(y)=&-\bar{t}_{1}\bar{t}_{2}\alpha_{1}\widehat{\alpha}_{2}y.
\end{array}
\end{displaymath}
The roots of $R(x,y)=0$ are $X_{\pm}(y)=\frac{-\widehat{b}(y)\pm\sqrt{D_{y}(y)}}{2\widehat{a}(y)}$, $Y_{\pm}(x)=\frac{-b(x)\pm\sqrt{D_{x}(x)}}{2a(x)}$, where $D_{y}(y)=\widehat{b}(y)^{2}-4\widehat{a}(y)\widehat{c}(y)$, $D_{x}(x)=b(x)^{2}-4a(x)c(x)$.

\begin{lemma}\label{LEM}
For $|y|=1$, $y\neq1$, the kernel equation $R(x,y)=0$ has exactly one root $x=X_{0}(y)$ such that $|X_{0}(y)|<1$. For $\lambda_{1}<\bar{t}_{1}\bar{t}_{2}\alpha_{1}\widehat{\alpha}_{2}$, $X_{0}(1)=1$. Similarly, we can prove that $R(x,y)=0$ has exactly one root $y=Y_{0}(x)$, such that $|Y_{0}(x)|\leq1$, for $|x|=1$.
\end{lemma}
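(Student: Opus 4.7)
The plan is a standard Rouch\'e's theorem argument. For each $y$ on the unit circle, view $R(x,y)=\widehat{a}(y)x^{2}+\widehat{b}(y)x+\widehat{c}(y)$ as a polynomial of degree at most two in $x$ and count its zeros in the open unit disc. The natural splitting is $R=F+G$ with $F(x,y):=\widehat{b}(y)\,x$ (whose single zero in $|x|<1$ is $x=0$) and $G(x,y):=\widehat{a}(y)x^{2}+\widehat{c}(y)$. On $|x|=1$ one has $|F(x,y)|=|\widehat{b}(y)|$ and $|G(x,y)|\leq|\widehat{a}(y)|+|\widehat{c}(y)|$, so the whole argument reduces to proving the strict dominance
\begin{equation*}
|\widehat{b}(y)|>|\widehat{a}(y)|+|\widehat{c}(y)|,\qquad |y|=1,\ y\neq 1.
\end{equation*}
If this holds, Rouch\'e's theorem immediately yields a unique root $X_{0}(y)$ of $R(\cdot,y)=0$ with $|X_{0}(y)|<1$.

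The heart of the proof is the verification of this dominance. At $y=1$ the inequality degenerates to equality: from the definitions one reads off $\widehat{a}(1)=-(\widehat{\lambda}_{1}+L_{1}+L_{3})<0$ and $\widehat{c}(1)=-\bar{t}_{1}\bar{t}_{2}\alpha_{1}\widehat{\alpha}_{2}<0$, and the identity $R(1,1)=0$ then gives $\widehat{b}(1)=-\widehat{a}(1)-\widehat{c}(1)$, so $|\widehat{b}(1)|=|\widehat{a}(1)|+|\widehat{c}(1)|$. For $y\ne 1$ on $|y|=1$ the plan is to write $\widehat{b}(y)=-\widehat{a}(y)-\widehat{c}(y)+S(y)$, where $S(y):=\widehat{a}(y)+\widehat{b}(y)+\widehat{c}(y)=R(1,y)$; using the capture-channel identity $\overline{P}_{s}(1,R_{i},\{1,2\})+P_{s}(1,R_{i},\{1,2\})=1$ one verifies the flow identity $\lambda_{2}=\widehat{\lambda}_{2}+L_{2}+L_{3}$, from which a short computation produces $S(y)=(1-y)[y\lambda_{2}-\bar{t}_{1}\bar{t}_{2}\alpha_{2}\widehat{\alpha}_{1}]$. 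Under the stability condition $\lambda_{2}<\bar{t}_{1}\bar{t}_{2}\alpha_{2}\widehat{\alpha}_{1}$, the bracketed term is nonzero for every $|y|=1$, so $S(y)\ne 0$ whenever $y\ne 1$. Tracking the arguments of $-\widehat{a}(y)-\widehat{c}(y)$ and $S(y)$ one can then upgrade the triangle inequality to a strict one; this is the main technical obstacle, since the estimate must be tight (the bound degenerates to equality exactly at $y=1$ and admits no slack elsewhere).

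The claim $X_{0}(1)=1$ is then verified by direct substitution: by Vieta's formulas the two roots of the quadratic $R(x,1)=0$ are $x=1$ and $x=\widehat{c}(1)/\widehat{a}(1)=\bar{t}_{1}\bar{t}_{2}\alpha_{1}\widehat{\alpha}_{2}/\lambda_{1}$ (using the analogous identity $\lambda_{1}=\widehat{\lambda}_{1}+L_{1}+L_{3}$), and the stability hypothesis $\lambda_{1}<\bar{t}_{1}\bar{t}_{2}\alpha_{1}\widehat{\alpha}_{2}$ places the second root strictly outside $|x|\le 1$; hence $X_{0}(1)=1$ is the unique root of $R(\cdot,1)$ in the closed unit disc. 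The analogous statement for $Y_{0}(x)$ is obtained by reading off the companion expansion $R(x,y)=a(x)y^{2}+b(x)y+c(x)$ and running exactly the same Rouch\'e argument with the roles of $x$ and $y$ interchanged.
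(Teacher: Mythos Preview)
Your outline is the right shape (Rouch\'e on the unit circle), but the splitting you chose leaves the crucial step unproved. You need the strict dominance $|\widehat{b}(y)|>|\widehat{a}(y)|+|\widehat{c}(y)|$ on $|y|=1$, $y\neq1$, and you yourself flag this as ``the main technical obstacle'' without actually carrying it out. Writing $\widehat{b}=-\widehat{a}-\widehat{c}+S$ with $S(y)=(1-y)[y\lambda_{2}-\bar{t}_{1}\bar{t}_{2}\alpha_{2}\widehat{\alpha}_{1}]$ does not by itself give the strict triangle inequality: you would need to rule out cancellation between $-\widehat{a}-\widehat{c}$ and $S$ for \emph{every} $y$ on the circle, and you have not done so. As stated, the proposal is therefore incomplete at precisely the point that matters.

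The paper avoids this difficulty by a different, probabilistically motivated splitting. One observes that (after clearing denominators) the kernel equation $R(x,y)=0$ is equivalent to
\[
xy=\Psi(x,y),
\]
where $\Psi(x,y)$ is the probability generating function of a \emph{proper} bivariate distribution (essentially the one-step transition of $(N_{1},N_{2})$ shifted by the guaranteed departure/arrival offset). Since a non-degenerate pgf satisfies $|\Psi(x,y)|<1$ whenever $|x|=|y|=1$ and $(x,y)\neq(1,1)$, the Rouch\'e comparison on $|x|=1$ between $xy$ and $\Psi(x,y)$ is immediate for each fixed $|y|=1$, $y\neq1$: the dominant term $xy$ has exactly one zero in $|x|<1$ (namely $x=0$), so $xy-\Psi(x,y)$, and hence $R(\cdot,y)$, has exactly one zero there. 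Your treatment of the boundary case $y=1$ via Vieta is fine and matches the paper's direct factorization $R(x,1)=(x-1)\bigl(\lambda_{1}-\bar{t}_{1}\bar{t}_{2}\alpha_{1}\widehat{\alpha}_{2}/x\bigr)=0$. The takeaway: replace your $F=\widehat{b}(y)x$ versus $G=\widehat{a}(y)x^{2}+\widehat{c}(y)$ splitting by the $xy$ versus $\Psi(x,y)$ splitting; the probabilistic interpretation of $\Psi$ makes the strict inequality automatic and removes the obstacle you could not close.
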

\begin{proof}
See \ref{a0}.
\end{proof}
Using simple algebraic arguments, the following lemma provides information about the location of the branch points of the two-valued functions $Y(x)$, $X(y)$.
\begin{lemma}\label{lem1}
The algebraic function $Y(x)$, defined by $R(x,Y(x)) = 0$, has four real branch points $0< x_{1}<x_{2}\leq1<x_{3}<x_{4}<\frac{1+\tilde{\lambda}_{1}}{\tilde{\lambda}_{1}}$. Moreover, $D_{x}(x)<0$, $x\in(x_{1},x_{2})\cup(x_{3},x_{4})$ and $D_{x}(x)>0$, $x\in(-\infty,x_{1})\cup(x_{2},x_{3})\cup(x_{4},\infty)$. Similarly, $X(y)$, defined by $R(X(y),y) = 0$, has four real branch points $0\leq y_{1}<y_{2}\leq1<y_{3}<y_{4}<\frac{1+\tilde{\lambda}_{2}}{\tilde{\lambda}_{2}}$, and $D_{x}(y)<0$, $y\in(y_{1},y_{2})\cup(y_{3},y_{4})<$ and $D_{x}(y)>0$, $y\in(-\infty,y_{1})\cup(y_{2},y_{3})\cup(y_{4},\infty)$. 
\end{lemma}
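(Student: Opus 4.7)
The plan is to analyze the discriminant $D_x(x)=b(x)^2-4a(x)c(x)$, whose real zeros are by definition the branch points of $Y(x)$, via a sign analysis at a few strategic points. From the explicit expressions for $a(x),b(x),c(x)$, one reads off that $a$ and $b$ are polynomials of degree $2$ and $c$ is of degree $1$ in $x$; a direct inspection shows that the coefficient of $x^4$ in $D_x(x)$ equals $[\widehat{\lambda}_1(1+\widehat{\lambda}_2)+L_1]^2>0$. Hence $D_x$ is a polynomial of degree exactly $4$ with positive leading coefficient, so $D_x(x)\to +\infty$ as $|x|\to\infty$, and $D_x$ has $0$, $2$, or $4$ real roots; I aim to show it has four, in the claimed intervals.

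The central identity is the evaluation of $D_x$ at three test points. First, since $a(0)=c(0)=0$, one has $D_x(0)=b(0)^2=(\bar t_1\bar t_2\alpha_1\widehat{\alpha}_2)^2>0$. Second, substituting $x=1$ gives $a(1)=-(\widehat{\lambda}_2+L_2+L_3)$, $c(1)=-\bar t_1\bar t_2\alpha_2\widehat{\alpha}_1$, and $b(1)=(\widehat{\lambda}_2+L_2+L_3)+\bar t_1\bar t_2\alpha_2\widehat{\alpha}_1$, so that $D_x(1)=(A-B)^2\geq 0$, where $A:=\widehat{\lambda}_2+L_2+L_3$ is the mean arrival rate at $R_2$ and $B:=\bar t_1\bar t_2\alpha_2\widehat{\alpha}_1$ its service rate in the first dominant system of Theorem~\ref{Thm2users}. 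By the stability conditions one has $A<B$ strictly in the interior of the stability region, with equality exactly on the boundary, so $D_x(1)>0$ under strict stability; this will give the bound $x_2\leq 1$ (with $x_2=1$ realized only at the stability boundary). Third, at $x^*:=(1+\widehat{\lambda}_1)/\widehat{\lambda}_1$ the factor $1+\widehat{\lambda}_1(1-x)$ vanishes, so $a(x^*)=-x^*(L_2+L_3 x^*)<0$, and a direct (slightly lengthy) polynomial manipulation yields $D_x(x^*)>0$, which will pin $x_4<x^*$.

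To exhibit a point of $(0,1)$ at which $D_x$ is negative, I use that $b(x)$ is a downward parabola in $x$ (its $x^2$-coefficient being $-[\widehat{\lambda}_1(1+\widehat{\lambda}_2)+L_1]<0$) with $b(0)=-\bar t_1\bar t_2\alpha_1\widehat{\alpha}_2<0$ and $b(1)=A+B>0$; hence $b$ has a root $\tilde x_0\in(0,1)$, and at that root $D_x(\tilde x_0)=-4a(\tilde x_0)c(\tilde x_0)<0$, since one checks that $a(x)<0$ and $c(x)<0$ throughout $(0,x^*]$. Combined with $D_x(0)>0$ and $D_x(1)>0$, the intermediate-value theorem produces two branch points $0<x_1<\tilde x_0<x_2<1$. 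By Vieta's formulas the parabola $b$ has a second positive root $\tilde x_1$ with $\tilde x_0\tilde x_1=\bar t_1\bar t_2\alpha_1\widehat{\alpha}_2/[\widehat{\lambda}_1(1+\widehat{\lambda}_2)+L_1]$; a short argument using the lower bound on $\tilde x_0$ implied by $b(1)>0$ verifies $1<\tilde x_1<x^*$, so $D_x(\tilde x_1)<0$, and together with $D_x(1)>0$ and $D_x(x^*)>0$ this produces a second pair of sign changes $1<x_3<\tilde x_1<x_4<x^*$. Since $D_x$ has degree $4$ and at least four real roots in $(0,x^*)$, it has exactly four, and $D_x>0$ on $(-\infty,0)\cup(x^*,+\infty)$ by the degree count; the announced sign pattern of $D_x$ on the five sub-intervals of the real line then follows at once. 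The analogous statement for $X(y)$ is obtained by interchanging the roles of $x$ and $y$ throughout.

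I expect the main obstacle to be two small but non-trivial checks: verifying that the second root $\tilde x_1$ of $b$ lies strictly in $(1,x^*)$ (a Vieta-plus-bound computation), and obtaining the strict positivity $D_x(x^*)>0$ (a direct polynomial expansion combining the explicit value $a(x^*)=-x^*(L_2+L_3x^*)$ with the corresponding formulas for $b(x^*)$ and $c(x^*)$). Everything else reduces to routine polynomial algebra. The cleanest byproduct of the analysis is the identity $D_x(1)=(A-B)^2$, which explains precisely why the branch point $x_2$ approaches $1$ exactly as the arrival rates approach the boundary of the stability region of Theorem~\ref{Thm2users}, in keeping with the standard heuristic that the underlying random walk becomes null-recurrent there.
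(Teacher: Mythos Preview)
The paper does not actually prove this lemma: immediately before the statement it says only ``Using simple algebraic arguments we can prove the following lemma,'' and no argument appears anywhere in the text or appendices. So there is no paper proof to compare against.

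Your approach---a sign analysis of the quartic $D_x$ at the test points $0$, $1$, $x^{*}=(1+\widehat{\lambda}_1)/\widehat{\lambda}_1$ and at the zeros of $b$---is the standard route for this class of kernels (cf.\ Fayolle--Iasnogorodski--Malyshev, Cohen--Boxma) and is sound. The identities $D_x(0)=b(0)^2>0$ and $D_x(1)=(A-B)^2$ with $A=\widehat{\lambda}_2+L_2+L_3$, $B=\bar t_1\bar t_2\alpha_2\widehat{\alpha}_1$ are correct and nicely tie $x_2\le 1$ to the stability boundary; the use of the smaller root $\tilde x_0$ of $b$ to force $D_x(\tilde x_0)<0$ in $(0,1)$ is also fine, since $a(x)<0$ and $c(x)<0$ there.

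The only real gap is exactly the one you flag: the pair of checks $b(x^{*})<0$ (hence $\tilde x_1<x^{*}$) and $D_x(x^{*})>0$. These do hold, but they are not consequences of a Vieta bound on $\tilde x_0$ alone; you need the structural inequality $\bar t_1\bar t_2(\alpha_1\widehat{\alpha}_2+\alpha_2\widehat{\alpha}_1)\le 1$, which is where the capture assumption enters. Concretely, evaluating $b$ at $x^{*}$ (using $1+\widehat{\lambda}_1(1-x^{*})=0$) and rearranging gives an expression whose sign is controlled by $1-\bar t_1\bar t_2(\alpha_1\widehat{\alpha}_2+\alpha_2\widehat{\alpha}_1)-\text{(nonnegative terms)}$, and similarly for $D_x(x^{*})$. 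You should carry these two computations out explicitly rather than leave them as ``direct polynomial manipulation,'' since without the capture bound they need not go the right way.
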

To ensure the continuity of $Y(x)$ (resp. $X(y)$) we consider the following cut planes: $\tilde{C}_{x}=\mathbb{C}_{x}-([x_{1},x_{2}]\cup[x_{3},x_{4}]$, $\tilde{C}_{y}=\mathbb{C}_{y}-([y_{1},y_{2}]\cup[y_{3},y_{4}]$, where $\mathbb{C}_{x}$, $\mathbb{C}_{y}$ the complex planes of $x$, $y$, respectively. In $\tilde{C}_{x}$ (resp. $\tilde{C}_{y}$), denote by $Y_{0}(x)$ (resp. $X_{0}(y)$) the zero of $R(x,Y(x))=0$ (resp. $R(X(y),y)=0$) with the smallest modulus, and $Y_{1}(x)$ (resp. $X_{1}(y)$) the other one. 

Define the image contours, $\mathcal{L}=Y_{0}[\overrightarrow{\underleftarrow{x_{1},x_{2}}}]$, $\mathcal{M}=X_{0}[\overrightarrow{\underleftarrow{y_{1},y_{2}}}]$, where $[\overrightarrow{\underleftarrow{u,v}}]$ stands for the contour traversed from $u$ to $v$ along the upper edge of the slit $[u,v]$ and then back to $u$ along the lower edge of the slit. The following lemma shows that the mappings $Y(x)$, $X(y)$, for $x\in[x_{1},x_{2}]$, $y\in[y_{1},y_{2}]$ respectively, give rise to the smooth and closed contours $\mathcal{L}$, $\mathcal{M}$ respectively:
\begin{lemma}\label{SQ}\begin{enumerate}\item For $y\in[y_{1},y_{2}]$, the algebraic function $X(y)$ lies on a closed contour $\mathcal{M}$, which is symmetric with respect to the real line and defined by
\begin{displaymath}
\begin{array}{l}
|x|^{2}=m(Re(x)),\,m(\delta)=\frac{\widehat{c}(\zeta(\delta))}{\widehat{a}(\zeta(\delta))},\,|x|^{2}\leq\frac{\widehat{c}(y_{2})}{\widehat{a}(y_{2})},
\end{array}
\end{displaymath}
where, $\zeta(\delta)=\frac{k_{2}(\delta)-\sqrt{k_{2}^{2}(\delta)-4\bar{t}_{1}\bar{t}_{2}\alpha_{2}\widehat{\alpha}_{1}k_{1}(\delta)}}{2k_{1}(\delta)}$,
\begin{displaymath}
\begin{array}{rl}
k_{1}(\delta):=&\widehat{\lambda}_{2}(1+\widehat{\lambda}_{1})+L_{2}+2\delta(L_{3}-\widehat{\lambda}_{1}\widehat{\lambda}_{2}),\\
k_{2}(\delta):=&(1+2\delta)(L_{1}+\widehat{\lambda}_{1}(1+\widehat{\lambda}_{2}))+\widehat{\lambda}_{2}+L_{2}+L_{3}+\bar{t}_{1}\bar{t}_{2}(\alpha_{1}\widehat{\alpha}_{2}+\alpha_{2}\widehat{\alpha}_{1}).
\end{array}
\end{displaymath}
Set $\beta_{0}:=\sqrt{\frac{\widehat{c}(y_{2})}{\widehat{a}(y_{2})}}$, $\beta_{1}:=-\sqrt{\frac{\widehat{c}(y_{1})}{\widehat{a}(y_{1})}}$ the extreme right and left point of $\mathcal{M}$, respectively.
\item For $x\in[x_{1},x_{2}]$, the algebraic function $Y(x)$ lies on a closed contour $\mathcal{L}$, which is symmetric with respect to the real line and defined by
\begin{displaymath}
\begin{array}{l}
|y|^{2}=v(Re(y)),\,v(\delta)=\frac{c(\theta(\delta))}{a(\theta(\delta))},\,|y|^{2}\leq\frac{c(x_{2})}{a(x_{2})},
\end{array}
\end{displaymath}
where $\theta(\delta)=\frac{l_{2}(\delta)-\sqrt{l_{2}^{2}(\delta)-4\bar{t}_{1}\bar{t}_{2}\alpha_{1}\widehat{\alpha}_{2}l_{1}(\delta)}}{2l_{1}(\delta)}$,
\begin{displaymath}
\begin{array}{rl}
l_{1}(\delta):=&\widehat{\lambda}_{1}(1+\widehat{\lambda}_{2})+L_{1}+2\delta(L_{3}-\widehat{\lambda}_{1}\widehat{\lambda}_{2}),\\
l_{2}(\delta):=&(1+2\delta)(L_{2}+\widehat{\lambda}_{2}(1+\widehat{\lambda}_{1}))+\widehat{\lambda}_{1}+L_{2}+L_{3}+\bar{t}_{1}\bar{t}_{2}(\alpha_{1}\widehat{\alpha}_{2}+\alpha_{2}\widehat{\alpha}_{1}).
\end{array}
\end{displaymath}
Set $\eta_{0}:=\sqrt{\frac{c(x_{2})}{a(x_{2})}}$, $\eta_{1}=-\sqrt{\frac{c(x_{1})}{a(x_{1})}}$, the extreme right and left point of $\mathcal{L}$, respectively.
\end{enumerate}
\end{lemma}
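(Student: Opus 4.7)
The plan is to exploit the conjugate-pair structure of the roots of the kernel on the branch cut. Fix $y\in[y_{1},y_{2}]$; by Lemma \ref{lem1}, $D_{y}(y)\le 0$, so the two roots $X_{\pm}(y)$ of $\widehat{a}(y)x^{2}+\widehat{b}(y)x+\widehat{c}(y)=0$ (viewed as a quadratic in $x$) form a complex conjugate pair. Vieta's formulas give $X_{+}(y)X_{-}(y)=\widehat{c}(y)/\widehat{a}(y)$ and $X_{+}(y)+X_{-}(y)=-\widehat{b}(y)/\widehat{a}(y)$, so
\[
|X(y)|^{2}=\frac{\widehat{c}(y)}{\widehat{a}(y)},\qquad 2\,\mathrm{Re}(X(y))=-\frac{\widehat{b}(y)}{\widehat{a}(y)}.
\]
These are the two identities that drive the proof.

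The next step is to parametrize the contour by $\delta:=\mathrm{Re}(X(y))$ and invert the second identity to recover $y$ as a function of $\delta$. Writing $\widehat{b}(y)+2\delta\widehat{a}(y)=0$ and substituting the explicit polynomials $\widehat{a}(y),\widehat{b}(y)$ reduces the relation to a quadratic in $y$ of the form $k_{1}(\delta)y^{2}-k_{2}(\delta)y+\bar{t}_{1}\bar{t}_{2}\alpha_{2}\widehat{\alpha}_{1}=0$, whose coefficients $k_{1}(\delta),k_{2}(\delta)$ are precisely those displayed in the lemma. Of the two roots I select the one corresponding to the cut $[y_{1},y_{2}]$, namely the smaller root with the minus sign in front of the square root; this defines $\zeta(\delta)$. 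Plugging $y=\zeta(\delta)$ into $|X(y)|^{2}=\widehat{c}(y)/\widehat{a}(y)$ yields the claimed implicit equation $|x|^{2}=m(\delta)$ for the curve $\mathcal{M}$, together with the bound $|x|^{2}\le \widehat{c}(y_{2})/\widehat{a}(y_{2})$ which comes from maximizing over $y\in[y_{1},y_{2}]$.

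The smoothness, closedness and symmetry of $\mathcal{M}$ follow from standard arguments. As $y$ traverses the upper edge of $[y_{1},y_{2}]$, the root $X_{0}(y)$ is continuous and lies in the upper half-plane; along the lower edge one sees its complex conjugate, so the two traversals glue into a closed curve symmetric about the real axis. At the branch endpoints $y_{1},y_{2}$, where $D_{y}=0$ and $X_{+}=X_{-}$, the contour touches the real line; the two touching points are the common real values $-\widehat{b}(y_{j})/(2\widehat{a}(y_{j}))=\pm\sqrt{\widehat{c}(y_{j})/\widehat{a}(y_{j})}$, and the signs, to be fixed by inspecting $\widehat{b}$ at the two endpoints, deliver the rightmost point $\beta_{0}$ and the leftmost point $\beta_{1}$ as stated.

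Part (2) is proved by an entirely symmetric argument: swap $x$ and $y$, use $a(x)y^{2}+b(x)y+c(x)=0$ with $D_{x}(x)\le 0$ on $[x_{1},x_{2}]$, apply Vieta to obtain $|Y(x)|^{2}=c(x)/a(x)$ and $2\,\mathrm{Re}(Y(x))=-b(x)/a(x)$, then invert $b(x)+2\delta a(x)=0$ into a quadratic in $x$ whose coefficients match $l_{1}(\delta),l_{2}(\delta)$, and define $\theta(\delta)$ as its smaller root. The main obstacle throughout is purely algebraic: one must carefully collect terms after substitution so that the quadratic coefficients match $k_{i}(\delta)$ (resp.\ $l_{i}(\delta)$) exactly, and then verify the signs of $\widehat{b}$ (resp.\ $b$) at the branch endpoints to obtain $\beta_{0}>0>\beta_{1}$ (resp.\ $\eta_{0}>0>\eta_{1}$). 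There is no conceptual difficulty beyond this bookkeeping.
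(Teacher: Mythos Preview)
Your proposal is correct and follows essentially the same route as the paper's own proof: conjugate roots on the cut, Vieta to get $|X(y)|^{2}=\widehat{c}(y)/\widehat{a}(y)$ and $2\,\mathrm{Re}(X(y))=-\widehat{b}(y)/\widehat{a}(y)$, then inversion of the real-part relation into a quadratic in $y$ to define $\zeta(\delta)$, with the monotonicity of $\widehat{c}/\widehat{a}$ giving the bound and the branch endpoints giving $\beta_{0},\beta_{1}$. Your write-up is in fact more explicit than the paper's (you spell out the quadratic $k_{1}(\delta)y^{2}-k_{2}(\delta)y+\bar{t}_{1}\bar{t}_{2}\alpha_{2}\widehat{\alpha}_{1}=0$ and the gluing argument for closedness and symmetry), but the underlying argument is the same.
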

\begin{proof} See \ref{a1}.
\end{proof}

\subsection{The boundary value problems}\label{bound}
As indicated in the previous section the analysis is distinguished in two cases, which differ both from the modeling and the technical point of view.
\subsubsection{A Dirichlet boundary value problem}\label{dir}
Consider the case $\frac{\alpha_{1}\widehat{\alpha}_{2}}{\alpha_{1}^{*}\PssRoneDRone}+\frac{\alpha_{2}\widehat{\alpha}_{1}}{\alpha_{2}^{*}\PssRtwoDRtwo}=1$. Then, 
\begin{displaymath}
\begin{array}{c}
A(x,y)=\frac{d_{1}}{\alpha_{1}\widehat{\alpha}_{2}}B(x,y)\Leftrightarrow A(x,y)=\frac{\alpha_{2}\widehat{\alpha}_{1}}{d_{2}}B(x,y).
\end{array}
\end{displaymath}
%Therefore, for $y\in \mathcal{D}_{y}=\{x\in\mathcal{C}:|y|\leq1,|X_{0}(y)|\leq1\}$,
%\begin{equation}
%\begin{array}{l}
%\alpha_{2}\widehat{\alpha}_{1}H(X_{0}(y),0)+d_{2}H(0,y)+\frac{\alpha_{2}\widehat{\alpha}_{1}(1-\rho)C(X_{0}(y),y)}{A(X_{0}(y),y)}=0.
%\end{array}
%\label{con}
%\end{equation}
The following theorem summarizes the main result of this subsection.
\begin{theorem}\label{dirbv}
For $\rho<1$, $H(x,0)$ is derived as the solution of a Dirichlet boundary value problem on $\mathcal{M}$, given by
\begin{equation}
\begin{array}{c}
H(x,0)=(1-\rho)\{1+\frac{2\gamma(x)i}{\pi}\int_{0}^{\pi}\frac{f(e^{i\phi})\sin(\phi)}{1-2\gamma(x)\cos(\phi)-\gamma(x)^{2}}d\phi\},\,x\in G_{\mathcal{M}},
\end{array}
\label{soll}
\end{equation}
where $G_{\mathcal{M}}$ is the interior domain bounded by the closed contour $\mathcal{M}$, and $\gamma(.)$ a conformal mapping, see \ref{conf}. A similar integral expression can be derived for $H(0,y)$ by solving another Dirichlet boundary value problem on $\mathcal{L}$. Then, using (\ref{we}) we uniquely obtain $H(x,y)$.
\end{theorem}
\begin{proof}
See \ref{diri}
\end{proof}
\subsubsection{A homogeneous Riemann-Hilbert boundary value problem}\label{rh}
In case $\frac{\alpha_{1}\widehat{\alpha}_{2}}{\alpha_{1}^{*}\PssRoneDRone}+\frac{\alpha_{2}\widehat{\alpha}_{1}}{\alpha_{2}^{*}\PssRtwoDRtwo}\neq1$,
consider the following transformation:
\begin{displaymath}
\begin{array}{rl}
G(x):=&H(x,0)+\frac{\alpha_{1}^{*}\PssRoneDRone d_{2}H(0,0)}{d_{1}d_{2}-\alpha_{1}\widehat{\alpha}_{2}\alpha_{2}\widehat{\alpha}_{1}},\\
L(y):=&H(0,y)+\frac{\alpha_{2}^{*}\PssRtwoDRtwo d_{1}H(0,0)}{d_{1}d_{2}-\alpha_{1}\widehat{\alpha}_{2}\alpha_{2}\widehat{\alpha}_{1}}.
\end{array}
\end{displaymath}
The following Theorem summarizes the main result of this subsection.
\begin{theorem}\label{rhbv}
Under stability conditions given in Theorem \ref{Thm2users}, $H(x,0)$ is given in terms of the solution of a homogeneous Riemann-Hilbert boundary value problem
\begin{equation}
\begin{array}{rl}
H(x,0)=&\frac{\lambda_{1}d_{2}+\alpha_{1}\widehat{\alpha}_{2}(\bar{t}_{1}\bar{t}_{2}\alpha_{2}^{*}\PssRtwoDRtwo-\lambda_{2})}{(\bar{x}-1)^{r}\bar{t}_{1}\bar{t}_{2}(\alpha_{1}\widehat{\alpha}_{2}\alpha_{2}\widehat{\alpha}_{1}-d_{1}d_{2})}\left((\bar{x}-x)^{r}\exp[\frac{\gamma(x)-\gamma(1)}{2i\pi}\int_{|t|=1}\frac{\log\{J(t)\}}{(t-\gamma(x))(t-\gamma(1))}dt]\right.\vspace{2mm}\\&\left.+\frac{\alpha_{1}^{*}\PssRoneDRone d_{2}\bar{x}^{r}}{\alpha_{1}\widehat{\alpha}_{2}\alpha_{2}^{*}\PssRtwoDRtwo}\exp[\frac{-\gamma(1)}{2i\pi}\int_{|t|=1}\frac{\log\{J(t)\}}{t(t-\gamma(1))}dt]\right),
\end{array}
\label{fin}
\end{equation}
where $\bar{x}$ is given in \ref{ap4}, and $\gamma(.)$ a conformal mapping, see \ref{conf}. A similar expression can be derived for $H(0,y)$ by solving another Riemann-Hilbert boundary value problem on the closed contour $\mathcal{L}$. Then, using (\ref{we}) we uniquely obtain $H(x,y)$.
\end{theorem}
\begin{proof}
See \ref{riemann}
\end{proof}

\subsection{Performance metrics}\label{per}
In the following we derive formulas for the expected number of packets and the average delay at each relay node in steady state, say $E_{i}$ and $D_{i}$, $i=1,2,$ respectively. Denote by $H_{1}(x,y)$, $H_{2}(x,y)$ the derivatives of $H(x,y)$ with respect to $x$ and $y$ respectively. Then, $E_{i}=H_{i}(1,1)$, and using Little's law $D_{i}=H_{i}(1,1)/\lambda_{i}$, $i=1,2$. Using the functional equation (\ref{we}), (\ref{r1}) and (\ref{r2}) we arrive after simple calculations in
\begin{equation}
\begin{array}{lccr}
E_{1}=\frac{\lambda_{1}+d_{1}H_{1}(1,0)}{\bar{t}_{1}\bar{t}_{2}\alpha_{1}\widehat{\alpha}_{2}},&&&E_{2}=\frac{\lambda_{2}+d_{2}H_{2}(0,1)}{\bar{t}_{1}\bar{t}_{2}\alpha_{1}\widehat{\alpha}_{2}}.
\end{array}
\label{perf}
\end{equation} 
We only focus on $E_{1}$, $D_{1}$ (similarly we can obtain $E_{2}$, $D_{2}$). Note that $H_{1}(1,0)$ can be obtained using either (\ref{fin}) or (\ref{soll}). For the general case $\frac{\alpha_{1}\widehat{\alpha}_{2}}{\alpha_{1}^{*}\PssRoneDRone}+\frac{\alpha_{2}\widehat{\alpha}_{1}}{\alpha_{2}^{*}\PssRtwoDRtwo}\neq1$, and using (\ref{fin}),
\begin{equation}
\begin{array}{rl}
H_{1}(1,0)=&\frac{\lambda_{1}d_{2}+\alpha_{1}\widehat{\alpha}_{2}(\bar{t}_{1}\bar{t}_{2}\alpha_{2}^{*}\PssRtwoDRtwo-\lambda_{2})}{\bar{t}_{1}\bar{t}_{2}(\alpha_{1}\widehat{\alpha}_{2}\alpha_{2}\widehat{\alpha}_{1}-d_{1}d_{2})}\frac{\gamma^{\prime}(1)}{2\pi i}\int_{|t|=1}\frac{\log\{J(t)\}}{(t-\gamma(1))^{2}}dt.
\end{array}
\label{xz}
\end{equation}
Substituting (\ref{xz}) in (\ref{perf}) we obtain $E_{1}$, and dividing with $\lambda_{1}$, the average delay $D_{1}$. Note that the calculation of (\ref{perf}) requires the evaluation of integrals (\ref{zx}), (\ref{cv}), (\ref{xz}) using the trapezoid rule, and $\gamma(1)$, $\gamma^{\prime}(1)$, as described above. 

\section{Explicit expressions for the symmetrical model}\label{sym}
In the following we consider the symmetrical model and obtain exact expressions for the average delay without solving a boundary value problem.

As symmetrical, we mean the case where  $\widehat{\lambda}_{k}=\widehat{\lambda}$, $\PfkDk=1-\PskDk=1-q=\bar{q}$, $\PskDonetwo=\tilde{q}$, $t_{k}=t$, $k=1,2,$ $\alpha_{i}^{*}=\alpha^{*}$, $\alpha_{i}=\alpha$, $P_{s}(R_{i},D,\{R_{i}\})=\bar{s}$, $P_{s}(R_{i},D,\{R_{1},R_{2}\})=s_{1,2}$, $P_{s}^{*}(R_{i},D,\{R_{i}\})=\tilde{s}$, $P_{s}(k,R_{i},\{k\})=P_{s}(1,R_{i},\{1,2\})+P_{s}(2,R_{i},\{1,2\})=P_{s,k}(k,R_{i},\{1,2\})=r$, $k=1,2$, $i=1,2$. As a result, $d_{1}=d_{2}=d$. 

Then, by exploiting the symmetry of the model we clearly have $H_{1}(1,1)=H_{2}(1,1)$, $H_{1}(1,0)=H_{2}(0,1)$. Recall that $E_{i}=H_{i}(1,1)$ the expected number of packets in relay node $R_{i}$. Therefore, after simple calculations using (\ref{we}) we obtain,
\begin{equation}
\begin{array}{c}
E_{1}=\frac{\widehat{\lambda}+t(t+2\bar{t}\bar{q})r+\bar{t}^{2}dH_{1}(1,0)}{\bar{t}^{2}\alpha
\widehat{\alpha}-(\widehat{\lambda}+t(t+2\bar{t}\bar{q})r)}.
\end{array}
\label{t1}
\end{equation}
Setting $x=y$ in (\ref{we}), differentiating it with respect to $x$ at $x=1$, and using (\ref{r1}) we obtain
\begin{equation}
\begin{array}{c}
E_{1}+E_{2}=2E_{1}=\frac{2(\widehat{\lambda}+t(t+2\bar{t}\bar{q})-\widehat{\lambda}^{2}+2H_{1}(1,0)\bar{t}^{2}(\alpha\widehat{\alpha}+d)}{2[\bar{t}^{2}\alpha
\widehat{\alpha}-(\widehat{\lambda}+t(t+2\bar{t}\bar{q})r]}.
\end{array}
\label{t2}
\end{equation}
Using (\ref{t1}), (\ref{t2}) we finally obtain
\begin{equation}
\begin{array}{c}
E_{1}=E_{2}=\frac{\widehat{\lambda}^{2}d+2\widehat{\lambda}\alpha\widehat{\alpha}+\lambda(\lambda+2\bar{\lambda}\bar{q})r(2\alpha\widehat{\alpha}-rd)}{2\alpha^{*}\tilde{s}[\bar{\lambda}^{2}\alpha
\widehat{\alpha}-(\widehat{\lambda}+\lambda(\lambda+2\bar{\lambda}\bar{q})r)]}.
\end{array}
\label{rt}
\end{equation}
Therefore, using Little's law the average delay in a node is given by
\begin{equation}
\begin{array}{c}
D_{1}=D_{2}=\frac{\widehat{\lambda}^{2}d+2\widehat{\lambda}\alpha\widehat{\alpha}+t(t+2\bar{t}\bar{q})r(2\alpha\widehat{\alpha}-rd)}{2\tilde{\lambda}\alpha^{*}\tilde{s}[\bar{t}^{2}\alpha
\widehat{\alpha}-\lambda]},
\end{array}
\label{rt1}
\end{equation}
where $\lambda=\widehat{\lambda}+t(t+2\bar{t}\bar{q})r$, and $\bar{t}^{2}\alpha
\widehat{\alpha}-\lambda>0$ due to the stability condition.
\section{Focusing on idle relay nodes}\label{idle}
An important aspect in the management of a cooperative network is to investigate the idle slots of relays (i.e., when at least one of the relays is empty). This is crucial for a variety of reasons related to load balancing as well as to energy conservation of relays. In the following, we show how to obtain information about the stationary distribution of the hitting point process of $\mathbf{N}$, which is a Markov chain formed by the successive hitting points of the boundary $W=W_{0,1}\cup W_{0,0}\cup W_{1,0}$, where $W_{0,1}=\{(N_{1},N_{2}):N_{1}=0,N_{2}>0\}$, $W_{0,0}=\{(N_{1},N_{2}):N_{1}=0,N_{2}=0\}$, $W_{1,0}=\{(N_{1},N_{2}):N_{1}>0,N_{2}=0\}$. 

Denote by $t_{m}$ the hitting epochs of $\mathbf{N}_{n}$ with its boundary $W$, and $\{\mathbf{k}_{m},m=1,2,...\}$, $\mathbf{k}_{m}=(k_{m}^{(1)},k_{m}^{(2)})=\mathbf{N}_{t_{m}}$, with $t_{1}=0$, $\mathbf{k}_{1}\in W$. Then, the sequence $\mathbf{k}_{m}$ is the hitting point process \cite{cohe} of $\mathbf{N}_{n}$, and possesses a stationary distribution when $\mathbf{N}_{n}$, and possesses a stationary distribution. Let $\mathbf{s}=(s_{1},s_{2})$ a stochastic vector with distribution the stationary distribution of $\{\mathbf{k}_{m}\}$. Then, for $|z|\leq 1$, let $Q_{1}(z)=E(z^{s_{1}}1_{\{\mathbf{s}\in W_{1,0}\}})$, $Q_{2}(z)=E(z^{s_{2}}1_{\{\mathbf{s}\in W_{0,1}\}})$, $Q_{0}=E(1_{\{\mathbf{s}\in W_{0,0}\}})$. Then, $P((N_{1},N_{2})\in W)=1-E(1_{\{N_{1}>0,N_{2}>0\}})$. Simple arguments imply that,
\begin{displaymath}
\begin{array}{l}
Q_{1}(z)=\frac{H(z,0)-H(0,0)}{1-E(1_{\{N_{1}>0,N_{2}>0\}})},\,Q_{2}(z)=\frac{H(0,z)-H(0,0)}{1-E(1_{\{N_{1}>0,N_{2}>0\}})},\,Q_{0}=\frac{H(0,0)}{1-E(1_{\{N_{1}>0,N_{2}>0\}})}.
\end{array}
\end{displaymath}
Moreover, equation (\ref{we}) can be rewritten as
\begin{equation}
\begin{array}{c}
(xy-\Psi(x,y))[H(x,y)-H(x,0)-H(0,y)+H(0,0)]=xy[\tilde{C}(xy)-1]H(0,0)\\
+[y\tilde{A}(x,y)-xy][H(x,0)-H(0,0)]+[x\tilde{B}(x,y)-xy][H(0,y)-H(0,0)],
\end{array}
\end{equation}
or equivalently,
\begin{equation}
\begin{array}{c}
(xy-\Psi(x,y))\frac{E(x^{N_{1}}y^{N_{2}}1_{\{N_{1}>0,N_{2}>0\}})}{H(0,0)}=xy[\tilde{C}(xy)-1]\\
+[y\tilde{A}(x,y)-xy]\frac{Q_{1}(x)}{H(0,0)}+[x\tilde{B}(x,y)-xy]\frac{Q_{2}(y)}{H(0,0)},
\end{array}
\label{szx}
\end{equation}
with $Q_{0}+Q_{1}(1)+Q_{2}(1)=1$, and where,
\begin{displaymath}
\begin{array}{rl}
\Psi(x,y)=&Z(x,y)[xy(1+(1-x)L_{1}+(1-y)L_{2}+(1-xy)L_{3})\\&+\bar{t}_{1}\bar{t}_{2}[\alpha_{1}\widehat{\alpha}_{2}y(1-x)+\alpha_{2}\widehat{\alpha}_{1}x(1-y)],\\
\widehat{A}(x,y)=&x(1+(1-x)L_{1}+(1-y)L_{2}+(1-xy)L_{3})+\bar{t}_{1}\bar{t}_{2}\alpha^{*}_{1}(1-x)P_{s}^{*}(R_{1},D,\{R_{1}\}),\\
\widehat{B}(x,y)=&y(1+(1-x)L_{1}+(1-y)L_{2}+(1-xy)L_{3})+\bar{t}_{1}\bar{t}_{2}\alpha^{*}_{2}(1-y)P_{s}^{*}(R_{2},D,\{R_{2}\}),\\
\widehat{C}(x,y)=&1+(1-x)L_{1}+(1-y)L_{2}+(1-xy)L_{3}.
\end{array}
\end{displaymath}
Equation (\ref{szx}) can be solved following a similar approach by reducing this problem in a Riemann boundary value problem; see also \cite{coh}. Recall that $Q_{1}(x)$ (resp. $Q_{2}(y)$) is the generating function of the stationary distribution of the hitting points of $W_{1,0}$ (resp. $W_{0,1}$). 
\section{Numerical results}\label{num}
In this section we evaluate numerically the theoretical results obtained in the previous sections. We focus on a ``symmetric" setup in order to simplify the presentation. In particular, we assume that $\alpha^*=1$, $\alpha=0.7$, and $t=0.1$. Table \ref{tab1} summarizes the basic setup.

\begin{table}
\caption{Topology Setup}
\label{tab1}
	\begin{tabular}{|r|l|}
		\hline
		\bfseries Parameters - Description & \bfseries Value \\
		\hline
		Distance between the sources and the destination &$110m$\\
		\hline Distance between the sources and the relays&  $80m$ \\
	\hline	Distance between relays and destination& $80m$\\
		\hline Path-loss exponent&$4$\\
\hline Transmit power of the relays & $10mW$ \\
\hline Transmit power of the sources & $1mW$ \\
		\hline
	\end{tabular}	
\end{table}
\begin{table}
\caption{Success probabilities}
\label{tab2}
	\begin{tabular}{|r|c|c|}
		\hline
		\bfseries  & \bfseries $SINR_t=0.2$& $SINR_t=1$ \\
		\hline
		$P_s(D,1)$&$0.74$&0.23\\
		\hline $P_s(R_1,1)=P_s(R_2,1)$&  $0.92$&$0.66$ \\
		\hline	$P_{s}(R_{i},D,\{R_{i}\})$& $0.99$&$0.96$\\
		\hline $P_{s}(R_{i},D,\{R_{1},R_{2}\})$&$0.83$&$0.5$\\
		\hline
	\end{tabular}	
\end{table}
 
We also consider two cases for the $SINR_{t}$ threshold, say $0.2$ and $1$\footnote{Note that when $SINR_t=0.2$ the MPR capability is stronger thus, we can have more than two concurrent successful transmissions.}.
Using equation (1) in \cite{PappasTWC2015} the success probabilities are obtained for each value of SINR. Table \ref{tab2} contains the values of the success probabilities for each $SINR_t$ setup.

\subsection{Stability}
Here we present the effect of the number of sources on the stability region. We consider the cases where the number of sources is varying from $1$ to $11$, i.e. $N=1,...,11$. In Fig. \ref{stab02}, we consider the case where $SINR_t=0.2$, the outer curve in the plot represents the case where $N=1$, the inner the case corresponds to $N=11$.
Since we have stronger MPR capabilities at the receivers we observe that the region for up to four sources is convex thus, it the performance is better than a time division scheme as also described in Sections \ref{stab-anal-2} and \ref{stab-anal-n}.
\begin{figure}[!ht]
\centering
\includegraphics[scale=0.4]{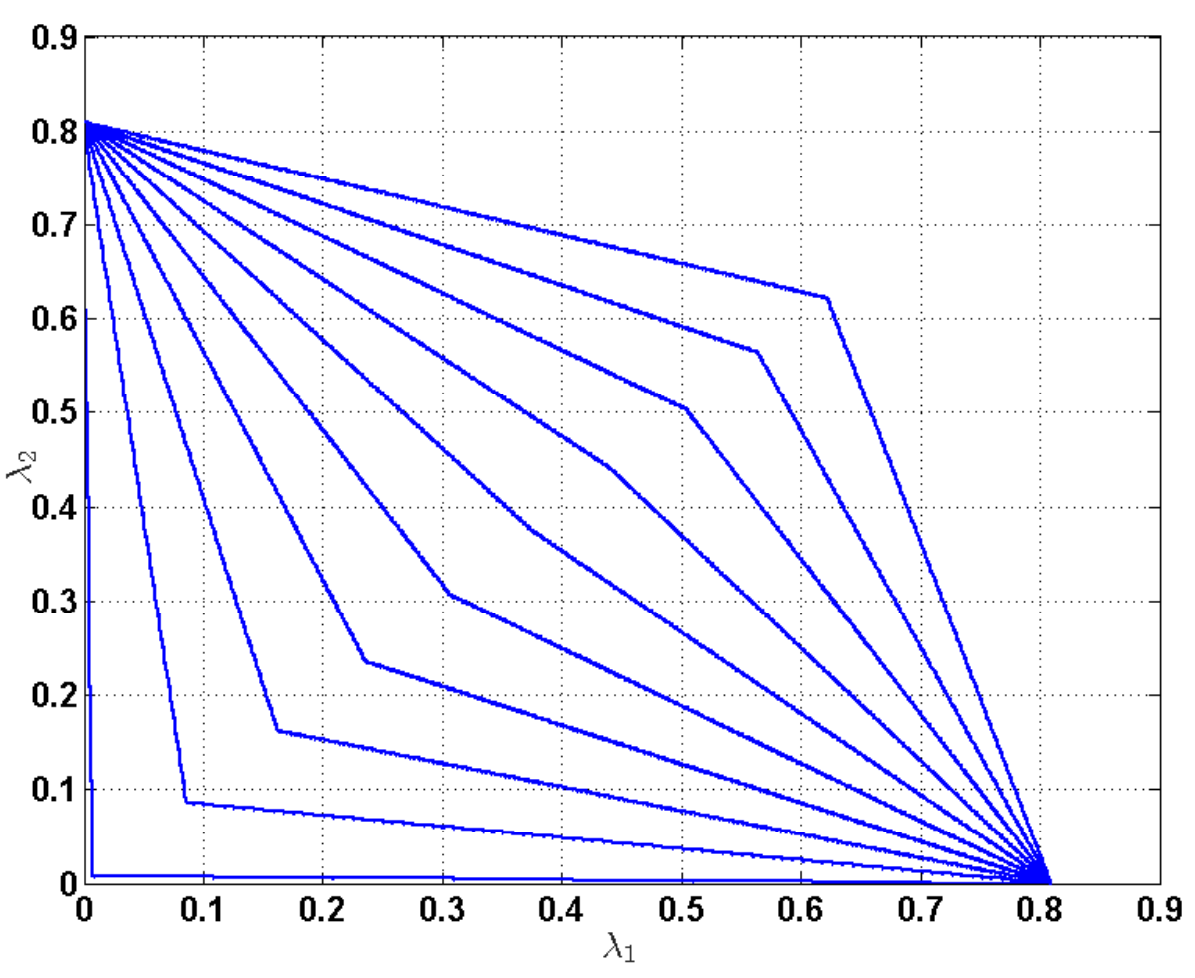}
\caption{The effect of number of sources on the stability region for $SINR_t=0.2$.}
\label{stab02}
\end{figure}
In Fig. \ref{stab1}, we consider the case where $SINR_t=1$, the outer curve in the plot represents the case where $N=1$ and the inner the case for $N=11$. In this plot, we observe that for more than two sources the region is not a convex set. Thus, a time division scheme would be preferable as also described in Sections \ref{stab-anal-2} and \ref{stab-anal-n}.
\begin{figure}[!ht]
\centering
\includegraphics[scale=0.4]{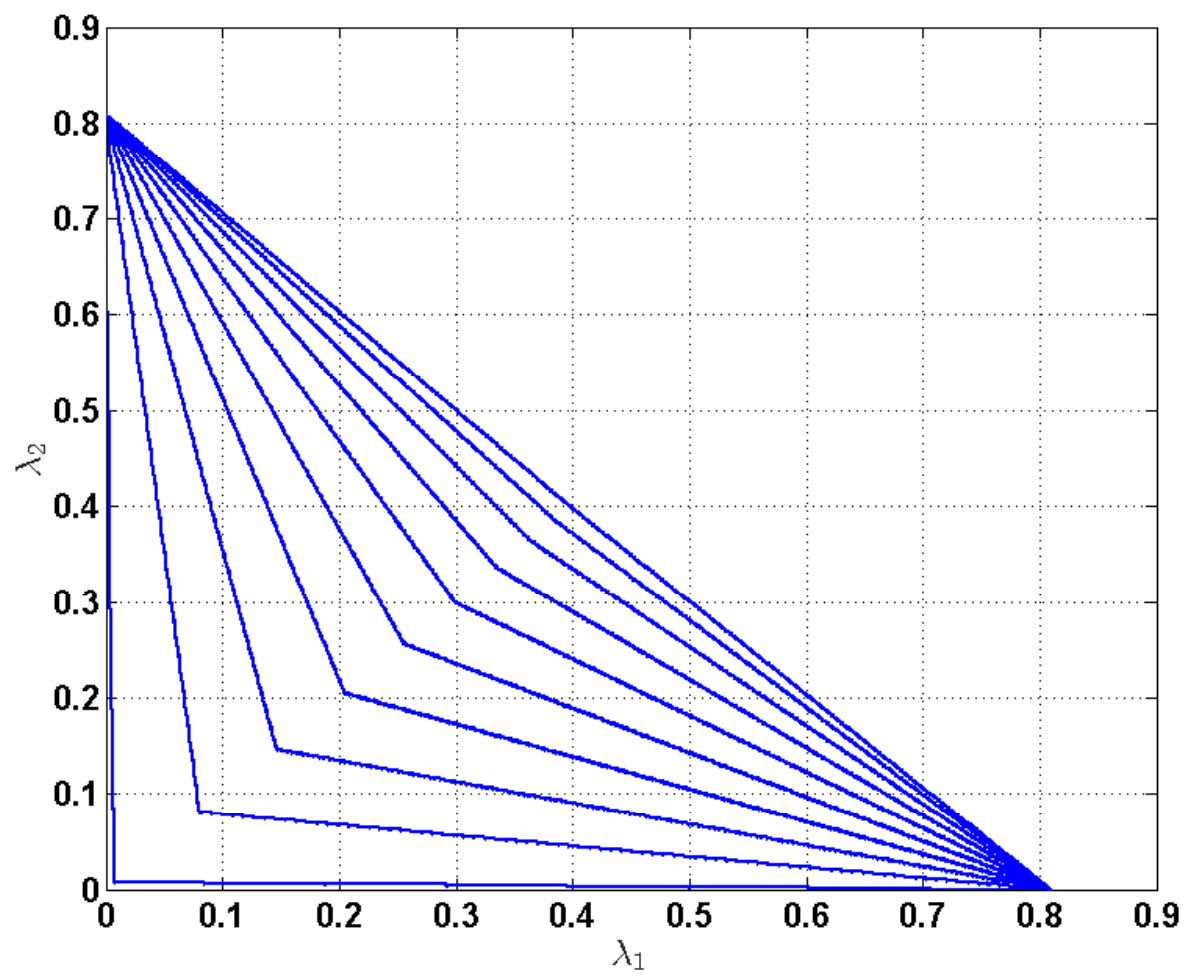}
\caption{The effect of number of sources on the stability region for $SINR_t=1$.}
\label{stab1}
\end{figure}

In both cases, we observe that as the number of sources is increasing, then the number of slots that the relays can transmit packets from their queues is decreasing. Thus, when $N=11$, the stability region is approaching the empty set, which is an indication that the relays cannot sustain the traffic in their queues.
\subsection{Throughput performance}
We provide numerical evaluation regarding throughput per source and aggregate throughput for the case of pure relays, i.e. $\widehat{\lambda}_{1}=\widehat{\lambda}_{2}=0$.

The throughput per source, as a function of the number of sources in the network is depicted in Fig. \ref{thrU}. We observe that the throughput per source is decreasing as the number of sources is increasing. Moreover, we also observe that for $SINR_t=0.2$, the system becomes unstable after $12$ sources, while for $SINR_t=1$ the system remains stable when the number of sources is up to $6$. The aggregate throughput is depicted in Fig. \ref{athr}. Note that the maximum aggregate throughput for $SINR_t=0.2$ and $SINR_t=1$ is achieved for twelve and six sources respectively.
\begin{figure}[!ht]
\centering
\includegraphics[scale=0.4]{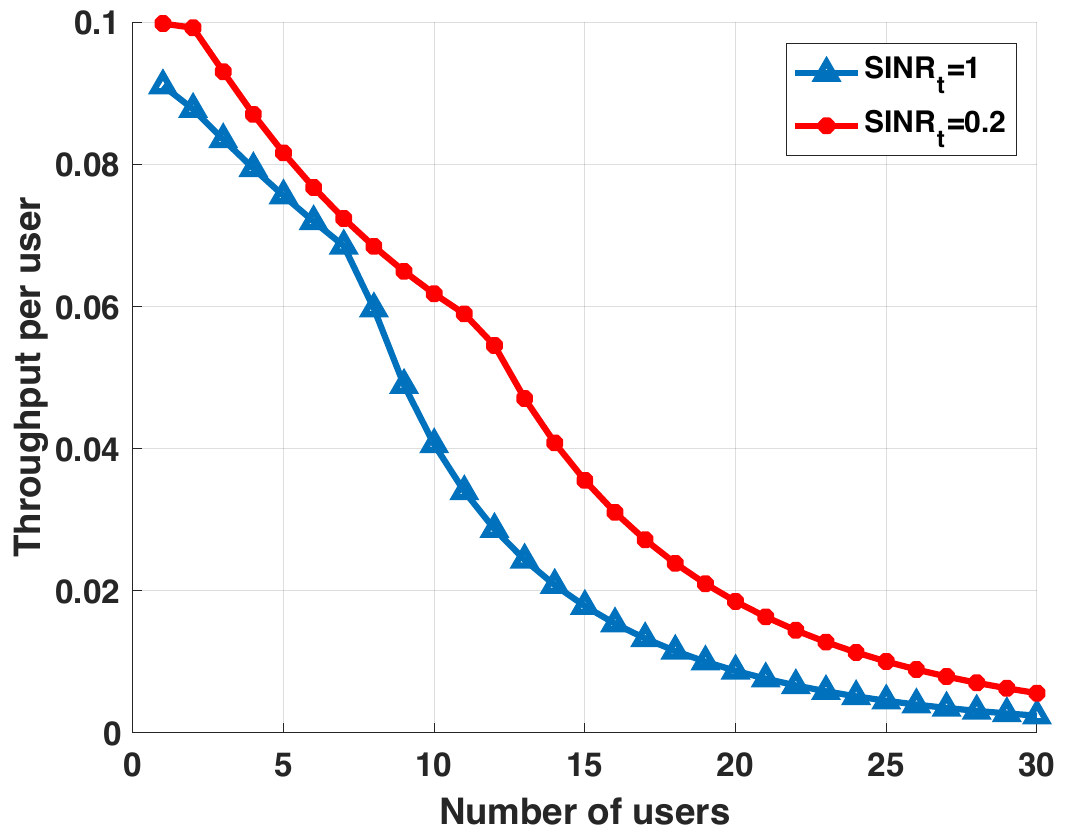}
\caption{Throughput per source versus the number of sources.}
\label{thrU}
\end{figure}

\begin{figure}[!ht]
\centering
\includegraphics[scale=0.4]{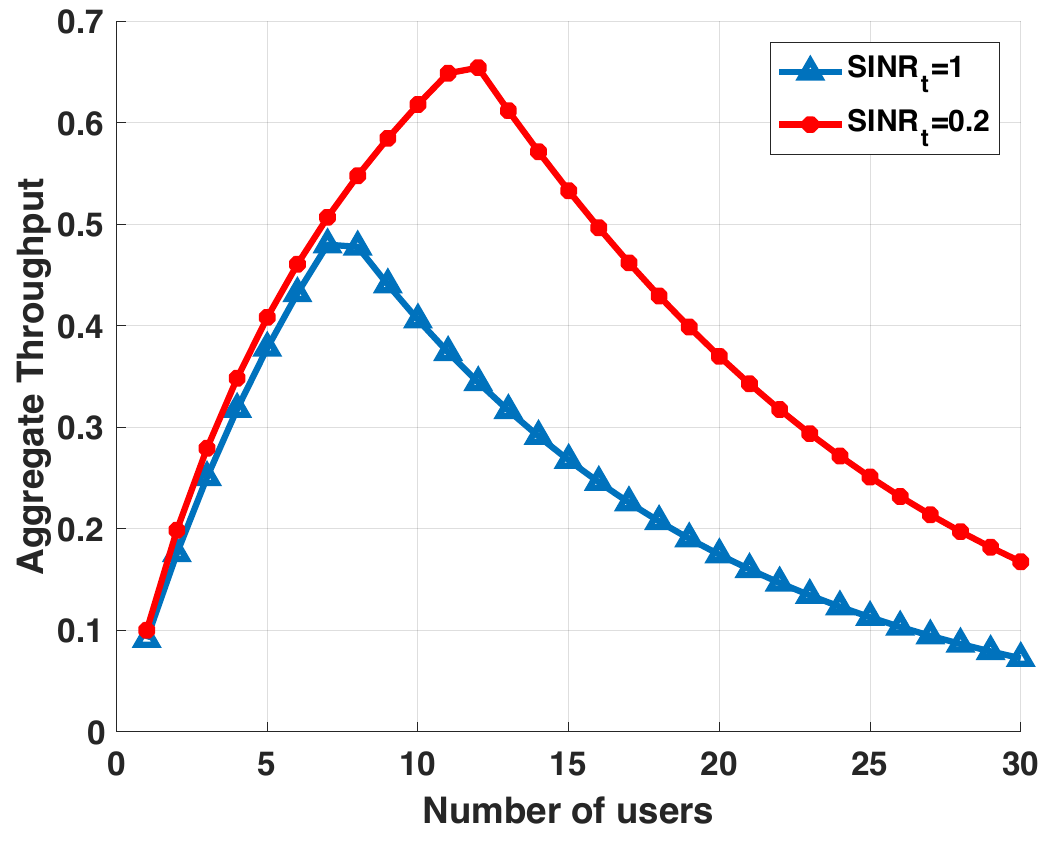}
\caption{Aggregate throughput versus the number of sources.}
\label{athr}
\end{figure}
\subsection{Average Delay and Stability Region for the capture model} 
In this part we will evaluate the average delay performance. The setup will be different from the previous two subsection due to the capture channel model assumed in the analysis.
\paragraph{Example 1. The symmetrical system} In the following we consider the symmetric system and we investigate the effect of the system parameters on the average delay. We assume that $q=0.5$, $\bar{s}=0.8$, $\tilde{s}=0.9$, $s_{12}=0.4$. In Fig. \ref{f153} we can see the effect of $r$ (i.e., the reception probability of blocked packet by a relay node) on the average delay for increasing values of $\widehat{\lambda}$ (i.e., the average number of of external packet arrivals at a relay node during a time slot) and $\alpha$ (i.e., the transmission probability of a relay). As expected, the increase in $\widehat{\lambda}$ increases the expected delay, and that decrease becomes more apparent as $\alpha$ takes small values and $r$ increases.
\begin{figure}[!ht]
\centering
\includegraphics[scale=0.43]{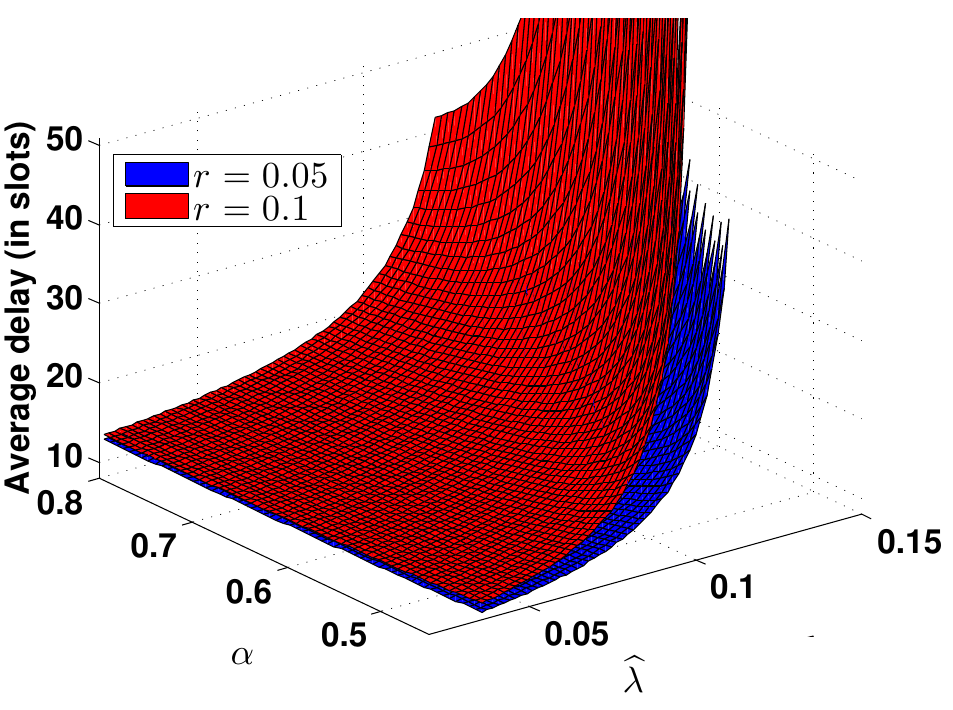}
\caption{The average delay vs $\alpha$ and $\widehat{\lambda}$ for different values of $r$.}
\label{f153}
\end{figure}

Figure \ref{f152} illustrate how sensitive is the average delay as we increase the probability of a direct transmission (at the beginning of a slot) of a source. In particular, as $t$ remains small, the increase in $\widehat{\lambda}$ from $0.1$ to $0.15$ will not affect the average delay. As $t$ increases, the simultaneous increase in $\widehat{\lambda}$ will cause a rapid increase in the average delay, even when we set the transmission probability $\alpha=\alpha^{*}$. This is expected, since at the beginning of a slot, sources have precedence over the relays.
\begin{figure}[!ht]
\centering
\includegraphics[scale=0.4]{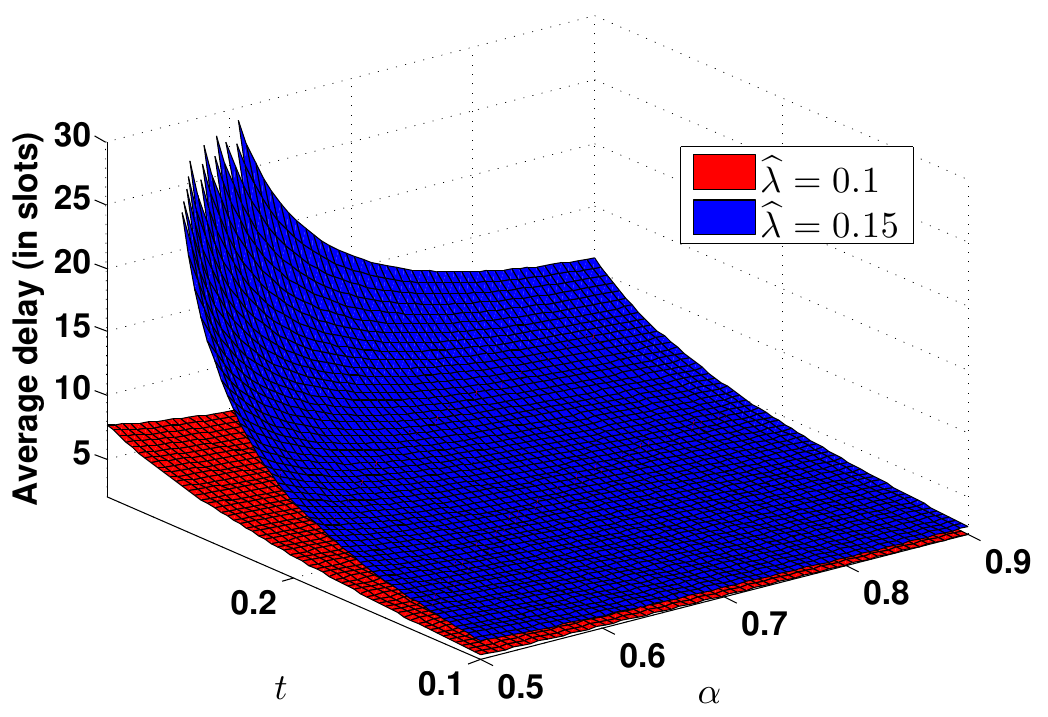}
\caption{Effect of $\widehat{\lambda}$ on average delay.}
\label{f152}
\end{figure}

Similar observations can be deduced by Fig. \ref{f151}, where we can observe the average delay as a function of $\alpha^{*}$ and $\widehat{\lambda}$. Clearly, as $t$ increases from $0.3$ to $0.4$, the average delay increases rapidly, especially when, $\widehat{\lambda}$ tends to $0.1$.
\begin{figure}[!ht]
\centering
\includegraphics[scale=0.43]{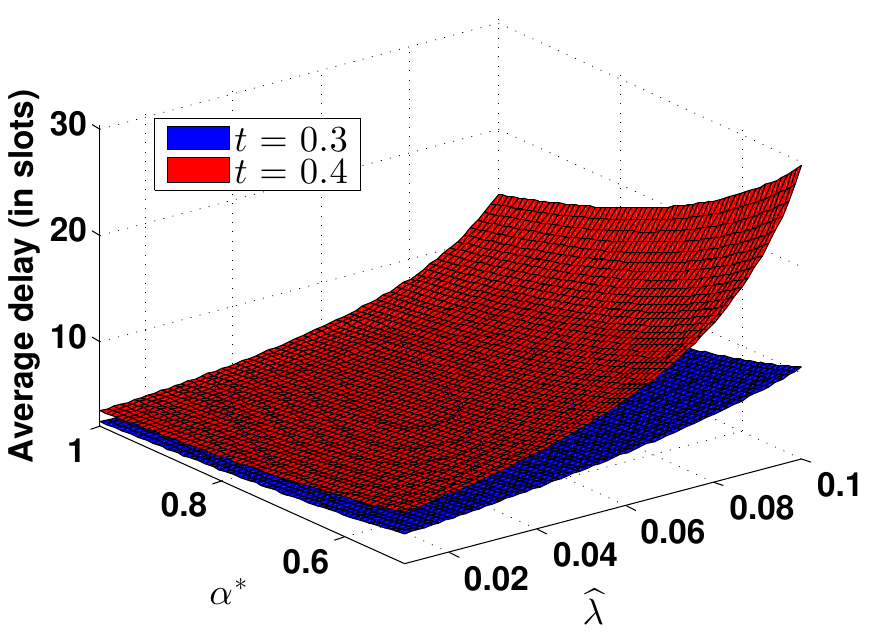}
\caption{Effect of $t$ on average delay.}
\label{f151}
\end{figure}
\paragraph{Example 2. Stability region}
We now focus on the general model, and specifically on the case $\frac{\alpha_{1}\widehat{\alpha}_{2}}{\alpha_{1}^{*}\tilde{s}_{1/\{1\}}}+\frac{\alpha_{2}\widehat{\alpha}_{1}}{\alpha_{2}^{*}\tilde{s}_{2/\{2\}}}\neq1$. We investigate the effect of parameters on the stability region, i.e., the set of arrival rate vectors $(\lambda_{1},\lambda_{2})$, for which the queues of the system are stable. In what follows, let $\alpha_{1}=0.7$, $\alpha_{2}=0.6$, $\alpha_{2}^{*}=0.9$, $\PssRoneDRone=\PssRtwoDRtwo=0.9$, $\PsRoneDRone=\PsRtwoDRtwo=0.8$, $P_{s}(R_{i},D,\{R_{1},R_{2}\})=0.4$, $i=1,2$, $t_{2}=0.3$.

In Fig. \ref{f15s} we observe the impact of $t_1$, i.e., the packet transmission probability of source $P_1$ at the beginning of a slot, on the stability region. It is easily seen that by changing this factor, we heavily affect the network performance. Indeed, although the destination node hears both sources and the relays, it gives priority to the sources at the beginning of a time slot, and thus, the increase of $t_1$ from $0.2$ to $0.4$ causes a deterioration of the stability region. Moreover, that increase will affect both relays, i.e., both average arrival rates at the relays are decreasing in order to sustain stability.

\begin{figure}[!ht]
\centering
\includegraphics[scale=0.43]{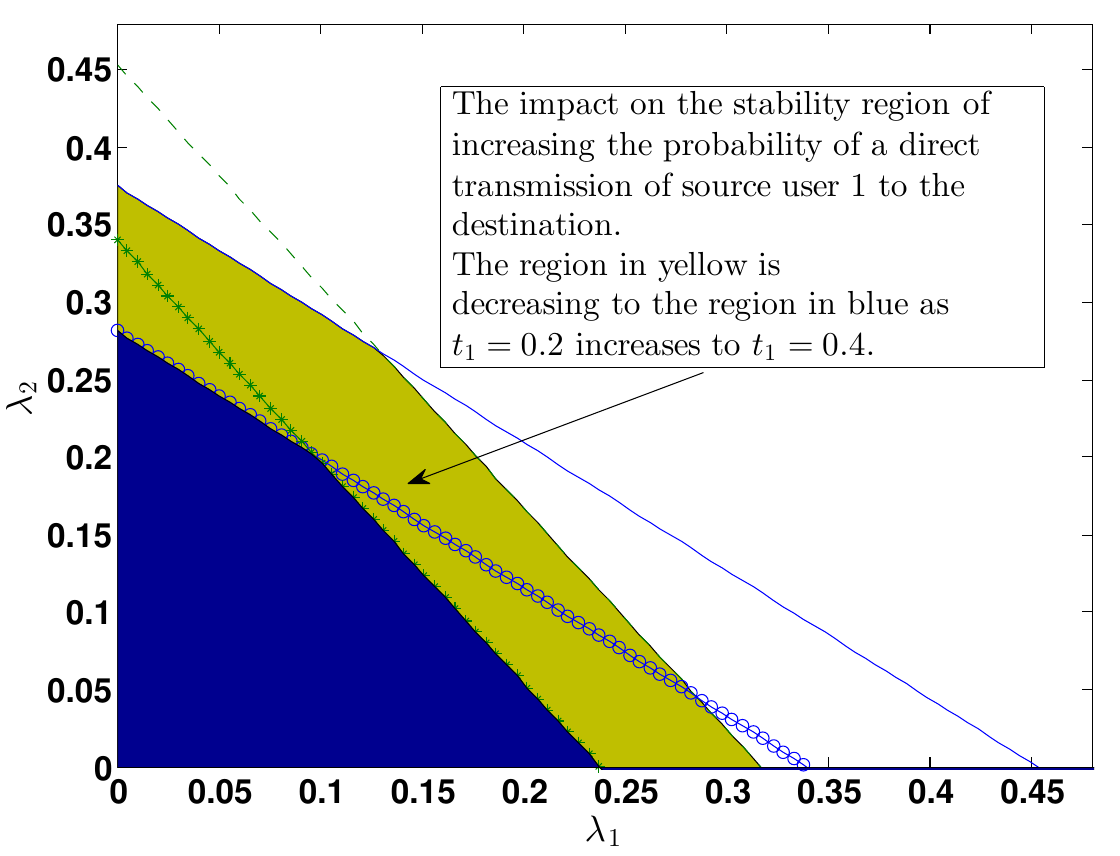}
\caption{Effect of $t_{1}$ on the stability region for $\alpha_{1}^{*}=0.9$.}
\label{f15s}
\end{figure}
\begin{figure}[!ht]
\centering
\includegraphics[scale=0.4]{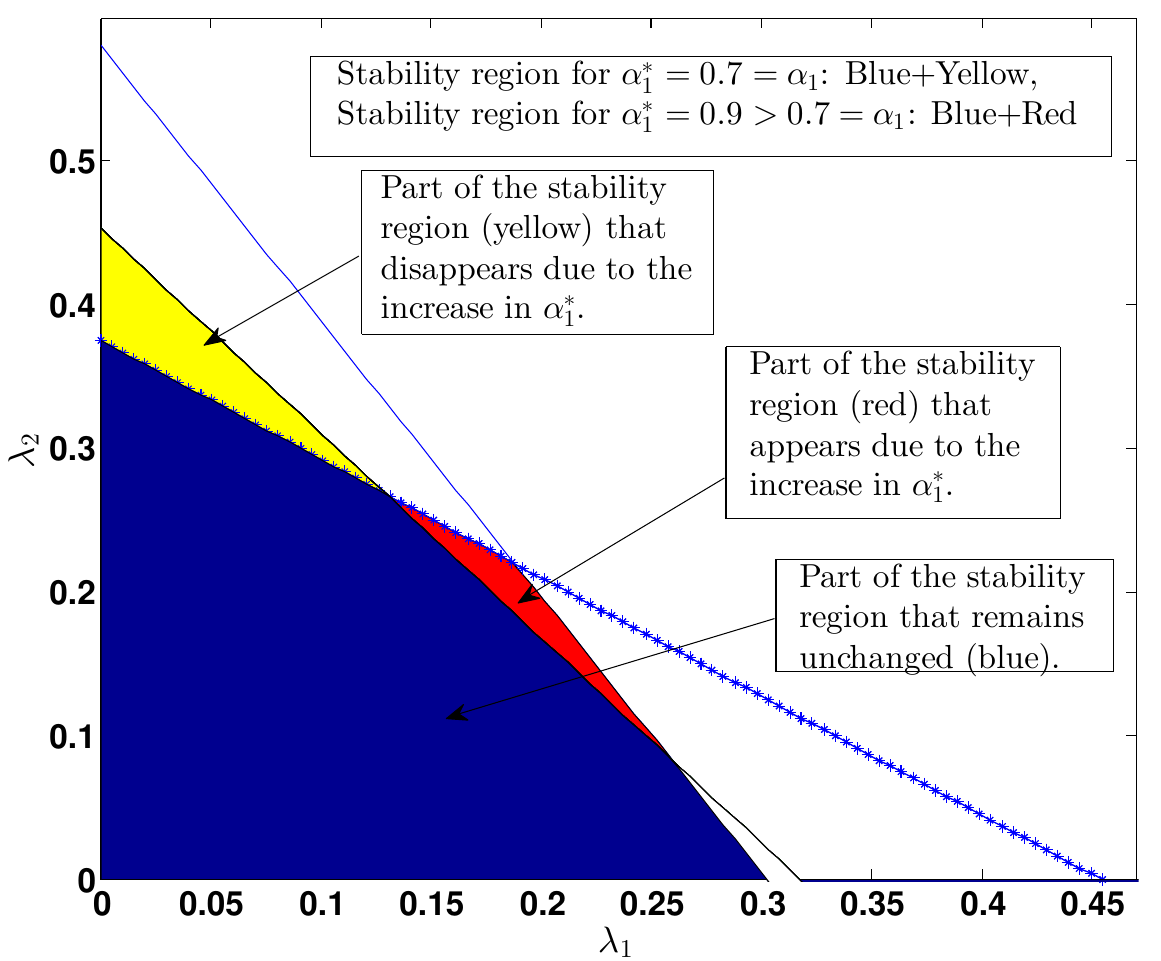}
\caption{Effect of transmission control on the stability region.}
\label{f15z}
\end{figure}

Finally, Fig. \ref{f15z} demonstrates the impact of the adaptive transmission control on the stability region when we set $t_{1}=0.2$. More precisely, we compare the stability regions obtained for the network with adaptive transmission policy, with the one obtained for the network with non-adaptive transmission policy for $R_{1}$. In the latter case, we assume that $\alpha_{1}^{*}=\alpha_{1}=0.7$, i.e., the relay $R_1$ does not adapt its transmission probability when it senses relay $R_2$ inactive. In such case, the stability region is the part in Fig. \ref{f15z} colored in blue and yellow. In the former case, when relay $R_1$ adapts its transmission probability to $\alpha_{1}^{*}=0.9$ (i.e., when it senses relay $R_2$ inactive) the stability region is given by the part of Fig. \ref{f15z} colored in blue and red. Note that the increase in $\alpha_{1}^{*}$ will affect the performance of relay $R_2$, since a packet in relay $R_1$ is more likely to be transmitted in a slot. Thus, we expect that the arrival rate for the relay $R_{2}$ to be lower in order to ensure stability.   

\section{Conclusions and future work}

In this work, we focused on the performance analysis of a relay-assisted cooperative wireless network with MPR capabilities and adaptive transmission policy. By applying the stochastic dominance technique we obtained the stability region under general MPR both for the asymmetric network of two-sources, two-relay nodes, and for the symmetric model of $N$-sources. In addition, we provided the aggregate throughput and the throughput per source in terms of closed form expressions.

We investigated the fundamental problem of characterizing the delay performance in an asymmetric network of two-sources, two-relays, by performing a detailed mathematical analysis, which led to the determination of the generating function of the stationary joint queue length distribution of relay nodes with the aid of the theory of boundary value problems. For the symmetrical network, closed form expressions for the expected delay at each relay node were derived without the need of solving a boundary value problem. We obtained extensive numerical results providing insights in the system performance.

In a future work we plan to generalize our work to the case of a completely random access network with no coordination among sources and relays. A challenging task is the investigation of the delay for a network with arbitrary number of relay nodes. This work will serve as a building block in order to obtain bounds for the expected delay at relay nodes. Moreover, we aim to investigate the time the relays get empty for the first time. This result will help us to optimize their transmission parameters in order to minimize the delay.

\section*{Acknowledgment}
This work has been partially supported by the European Union’s Horizon 2020 research and innovation programme and by the European Union’s Horizon 2020 research and innovation programme under the Marie Skłodowska-Curie grant agreement No. 643002 (ACT5G). This work was supported in part by ELLIIT and CENIIT.

\appendix
\section{Proof of Theorem \ref{Thm2users}}\label{proofThm2users}
The average service rates of the first and second relay are given in (\ref{eq:mu1}). Since the average service rate of each relay depends on the queue size of the other relay, the stability region cannot be computed directly. Thus, we apply the stochastic dominance technique introduced in \cite{Rao_TIT1988}, i.e. we construct hypothetical dominant systems, in which the relay with the empty queue transmits dummy packets, while the non-empty relay transmits according to its traffic.

In the first dominant system, the first relay transmit dummy packets and the second relay behaves as in the original system. All the rest operational aspects remain unaltered in the dominant system. Thus, in this dominant system, the first queue never empties, hence the service rate for the second relay is
\begin{displaymath}
\begin{array}{rl}
\mu_2 = &\overline{t}_1 \overline{t}_2 \alpha_{2} \left[  \alpha_{1} \PsRtwoDRoneRtwo + \overline{\alpha}_{1} \PsRtwoDRtwo \right]\\
=&\overline{t}_1 \overline{t}_2 \alpha_{2} \left[  \alpha_{1} \Delta_2+  \PsRtwoDRtwo \right].
\end{array}
\end{displaymath}

Then, we can obtain stability conditions for the second relay by applying Loynes' criterion \cite{Loynes}. The queue at the second source is stable if and only if $\lambda_2 < \mu_2$, that is $\lambda_2 < \overline{t}_1 \overline{t}_2 \alpha_{2} \left[  \alpha_{1} \Delta_2+ \PsRtwoDRtwo \right]$. Then, we can obtain the probability that the second relay is empty by applying Little's law, i.e.
\begin{equation} \label{eq:Pr2empty_D1}
\begin{array}{c}
\mathrm{Pr}\left(N_2 = 0 \right)  = 1-\frac{\lambda_2}{\overline{t}_1 \overline{t}_2 \alpha_{2} \left[  \alpha_{1} \Delta_2+  \PsRtwoDRtwo \right]}.
\end{array}
\end{equation}

After replacing \eqref{eq:Pr2empty_D1} into \eqref{eq:mu1} we obtain 
\begin{displaymath}
\begin{array}{c}
\mu_1 = \overline{t}_1 \overline{t}_2 \alpha_{1}^{*} \PssRoneDRone - \frac{\lambda_{2}\alpha_{1}^*\PssRoneDRone}{\Bone} + \frac{\lambda_{2}\alpha_{1} \left[\Btwo\right]}{\Bone}.
\end{array}
\end{displaymath}

Thus, after applying Loynes' criterion, the stability condition for the first relay in the first dominant system is 
\begin{displaymath}
\begin{array}{c}
\lambda_{1} < \overline{t}_1 \overline{t}_2 \alpha_{1}^{*} \PssRoneDRone - \frac{\lambda_{2} \left[\alpha_{1}^*\PssRoneDRone - \lambda_{2}\alpha_{1} \left[\Btwo\right]\right]}{\Bone}.
\end{array}
\end{displaymath}
Therefore, the stability region $\mathcal{R}_1$ obtained from the first dominant system is given in (\ref{eq:R1_2})

Similarly, we construct a second dominant system where the second relay transmits a dummy packet when it is empty and the first relay behaves as in the original system. All other operational aspects remain unaltered in the dominant system. Following the same steps as in the first dominant system, the stability region, $\mathcal{R}_2$, of the second dominant system will be given in (\ref{eq:R1_2}).

An important observation made in \cite{Rao_TIT1988} is that the stability conditions obtained by the stochastic dominance technique are not only sufficient but also necessary for the stability of the original system. 
The \emph{indistinguishability} argument~\cite{Rao_TIT1988} applies to our problem as well. Based on the construction of the dominant system, \emph{it is easy to see that the queue sizes in the dominant system are always greater than those in the original system, provided they are both initialized to the same value and the arrivals are identical in both systems}. Therefore, given $\lambda_{2}<\mu_{2}$, if for some $\lambda_{1}$, the queue at the first relay is stable in the dominant system, then the corresponding queue in the original system must be stable. Conversely, if for some $\lambda_{1}$ in the dominant system, the queue at the first relay saturates, then it will not transmit dummy packets, and as long as the first relay has a packet to transmit, the behavior of the dominant system is identical to that of the original system since dummy packet transmissions are eliminated as we approach the stability boundary. Therefore, the original and the dominant systems are indistinguishable at the boundary points.
\section{Derivation of the Functional equation}\label{po}
The queue evolution equation (\ref{x}) implies
\begin{equation}
\begin{array}{l}
E(x^{N_{1,n+1}}y^{N_{2,n+1}})=D(x,y)\left\{\bar{t}_{1}\bar{t}_{2}[P(N_{1,n}=N_{2,n}=0)+E(x^{N_{1,n}}1_{\{N_{1,n}>0,N_{2,n}=0\}})[1\right.\vspace{2mm}\\\left.+\alpha_{1}^{*}\PssRoneDRone(\frac{1}{x}-1)]
+E(y^{N_{2,n}}1_{\{N_{1,n}=0,N_{2,n}>0\}})(1+\alpha_{2}^{*}\PssRtwoDRtwo(\frac{1}{y}-1))\right.\vspace{2mm}\\\left.+E(x^{N_{1,n}}y^{N_{2,n}}1_{\{N_{1,n}>0,N_{2,n}>0\}})(1+\alpha_{1}\widehat{\alpha}_{2}(\frac{1}{x}-1)+\alpha_{2}\widehat{\alpha}_{1}(\frac{1}{y}-1))]\right.\vspace{2mm}\\
\left.+E(x^{N_{1,n}}y^{N_{2,n}})[t_{1}\bar{t}_{2}(1+S_{1}(x,y))+t_{2}\bar{t}_{1}(1+S_{2}(x,y))+t_{2}t_{1}(1+S_{3}(x,y))]\right\},
\end{array}
\label{fc}
\end{equation}
where $1_{\{A\}}$ denotes the indicator function of the event $A$, and
\begin{displaymath}
\begin{array}{rl}
S_{k}(x,y)=&\PfkDk[\PskRonek\overline{P}_{s}(k,R_{2},\{k\})(x-1)+\overline{P}_{s}(k,R_{1},\{k\}) P_{s}(k,R_{2},\{k\})(y-1)\\
&+P_{s}(k,R_{1},\{k\}) P_{s}(k,R_{2},\{k\})(xy-1)],\,k=1,2,\vspace{2mm}\\
S_{3}(x,y)=&(x-1)\{\PfDonetwo\PfRtwoonetwo(\PsoneRoneonetwo+\PstwoRoneonetwo)\\
&+\PsoneDonetwo\overline{P}_{s,2}(2,R_{2},\{1,2\})P_{s,2}(2,R_{1},\{1,2\})\\&+\PstwoDonetwo\overline{P}_{s,1}(1,R_{2},\{1,2\})P_{s,1}(2,R_{1},\{1,2\})\}\\
&+(y-1)\{\PfDonetwo\PfRoneonetwo(\PsoneRtwoonetwo+\PstwoRtwoonetwo)\\
&+\PsoneDonetwo\overline{P}_{s,2}(2,R_{1},\{1,2\})P_{s,2}(2,R_{2},\{1,2\})\\&+\PstwoDonetwo\overline{P}_{s,1}(1,R_{1},\{1,2\})P_{s,1}(2,R_{2},\{1,2\})\}\\
&+(xy-1)\{\PfDonetwo(\PsoneRoneonetwo+\PstwoRoneonetwo)(\PsoneRtwoonetwo\\&+\PstwoRtwoonetwo)+\PsoneDonetwo P_{s,2}(2,R_{2},\{1,2\})P_{s,2}(2,R_{1},\{1,2\})\\&+\PstwoDonetwo P_{s,1}(1,R_{2},\{1,2\})P_{s,1}(2,R_{1},\{1,2\})\}.
\end{array}
\end{displaymath}
Note that 
\begin{displaymath}
\begin{array}{rl}
H(0,0)=&\lim_{n\to\infty}P(N_{1,n}=N_{2,n}=0),\\
H(x,0)-H(0,0)=&\lim_{n\to\infty}E(x^{N_{1,n}}1_{\{N_{1,n}>0,N_{2,n}=0\}}),\\
H(0,y)-H(0,0)=&\lim_{n\to\infty}E(y^{N_{2,n}}1_{\{N_{1,n}=0,N_{2,n}>0\}}),\\
H(x,y)-H(x,0)-H(0,y)+H(0,0)=&\lim_{n\to\infty}E(x^{N_{1,n}}y^{N_{2,n}}1_{\{N_{1,n}>0,N_{2,n}>0\}}).
\end{array}
\end{displaymath}

Then, using (\ref{fc}) we obtain the functional equation (\ref{we}).

\section{Proof of Lemma \ref{LEM}}\label{a0}
It is easily seen that $R(x,y)=\frac{xy-\Psi(x,y)}{xyD(x,y)}$, where $\Psi(x,y)=D(x,y)[xy(1+L_{3}(xy-1))+y(1-x)(\bar{t}_{1}\bar{t}_{2}\alpha_{1}\widehat{\alpha}_{2}-L_{1}x)+x(1-y)(\bar{t}_{1}\bar{t}_{2}\alpha_{2}\widehat{\alpha}_{1}-L_{2}y)]$, where for $|x|\leq1$, $|y|\leq1$, $\Psi(x,y)$ is a generating function of a proper probability distribution. Now, for $|y|=1$, $y\neq1$ and $|x|=1$ it is clear that $|\Psi(x,y)|<1=|xy|$. Thus, from Rouch\'e's theorem, $xy-\Psi(x,y)$ has exactly one zero inside the unit circle. Therefore, $R(x,y)=0$ has exactly one root $x=X_{0}(y)$, such that $|x|<1$. For $y=1$, $R(x,1)=0$ implies
\begin{displaymath}
\begin{array}{c}
(x-1)\left(\lambda_{1}-\frac{\bar{t}_{1}\bar{t}_{2}\alpha_{1}\widehat{\alpha}_{2}}{x} \right)=0.
\end{array}
\end{displaymath}
Therefore, for $y=1$, and since $\lambda_{1}<\bar{t}_{1}\bar{t}_{2}\alpha_{1}\widehat{\alpha}_{2}$, the only root of $R(x,1)=0$ for $|x|\leq1$, is $x=1$.\hfill$\square$
\section{Proof of Lemma \ref{SQ}}\label{a1}
We will prove the part related to $\mathcal{M}$. Similarly, we can also prove the other. For $y\in[y_{1},y_{2}]$, $D_{y}(y)$ is negative, so $X_{0}(y)$, $X_{1}(y)$ are complex conjugates. Therefore, $|X(y)|^{2}=\frac{\widehat{c}(y)}{\widehat{a}(y)}=g(y)$. Clearly, $g(y)$ is an increasing function for $y\in[0,1]$ and thus, $|X(y)|^{2}\leq g(y_{2})=\beta_{0}$. Using simple algebraic considerations we can prove that, $X_{0}(y_{1}):=\beta_{1}=-g(y_{1})$ is the extreme left point of $\mathcal{M}$. Finally, $\zeta(\delta)$ is derived by solving $Re(X(y))=-\widehat{b}(y)/2\widehat{a}(y)$ for $y$ with $\delta = Re(X(y))$, and taking the solution such that $y\in[0,1]$.\hfill$\square$
\section{Intersection points of the curves}\label{ap4}
In the following, we focus on the location of the intersection points of $R(x,y)=0$, $A(x,y)=0$ (resp. $B(x,y)$). These points (if they exist) are potential singularities for the functions $H(x,0)$, $H(0,y)$, and thus, their investigation is crucial regarding the analytic continuation of $H(x,0)$, $H(0,y)$ outside the unit disk; see also Lemma 2.2.1 and Theorem 3.2.3 in \cite{fay1} for alternative approaches. 
\subsection{Intersection points between $R(x,y)=0$, $A(x,y)=0$.}
Let $R(x,y) = 0$, $x = X_{\pm}(y)$, $y\in \tilde{C}_{y}$. We can easily show that the resultant in $x$ of the two polynomials $R(x,y)$ and $A(x,y)$ is $Res_{x}(R,A;y)=y(y-1)Q(y)$, where
\begin{displaymath}
\begin{array}{rl}
Q(y)=&-d_{1}[\lambda_{2}d_{1}+a_{2}\widehat{a}_{1}(\widehat{\lambda}_{2}(1+\widehat{\lambda}_{1})+L_{2})]y^{2}+ya_{2}\widehat{a}_{1}[d_{1}(\lambda_{1}+\widehat{\lambda}_{2}(1+\widehat{\lambda}_{1})+L_{2})\\&-\bar{t}_{1}\bar{t}_{2}a_{1}^{*}P_{s}^{*}(R_{1},D,\{R_{1}\})(a_{2}\widehat{a}_{1}+d_{1})]+\bar{t}_{1}\bar{t}_{2}a_{1}^{*}P_{s}^{*}(R_{1},D,\{R_{1}\})(a_{2}\widehat{a}_{1})^{2}.
\end{array}
\end{displaymath}
Note that $Q(0)=\bar{t}_{1}\bar{t}_{2}a_{1}^{*}P_{s}^{*}(R_{1},D,\{R_{1}\})(a_{2}\widehat{a}_{1})^{2}>0$ and
\begin{displaymath}
Q(1)=d_{1}[\lambda_{1}\alpha_{2}\widehat{\alpha}_{1}-\lambda_{2}d_{1}-\bar{t}_{1}\bar{t}_{2}\alpha_{2}\widehat{\alpha}_{1}a_{1}^{*}P_{s}^{*}(R_{1},D,\{R_{1}\})]>0,
\end{displaymath}
since $d_{1}<0$ and due to the stability condition. Similarly, for $R(x,y) = 0$, $y = Y_{\pm}(x)$, $x\in \tilde{C}_{x}$, the resultant in $y$ of the two polynomials $R(x,y)$, $A(x,y)$ is $Res_{y}(R,A;x)=x(x-1)\bar{t}_{1}\bar{t}_{2}\alpha_{2}\widehat{\alpha}_{1}\tilde{Q}(x)$, where,
\begin{displaymath}
\begin{array}{rl}
\tilde{Q}(x)=&-[\alpha_{2}\widehat{\alpha}_{1}\lambda_{1}+(\widehat{\lambda}_{1}(1+\widehat{\lambda}_{2})+L_{1})d_{1}]x^{2}+x[(\widehat{\lambda}_{1}(1+\widehat{\lambda}_{2})+\lambda_{2}+L_{1})d_{1}\\&+(\alpha_{2}\widehat{\alpha}_{1}+d_{1})\alpha_{1}^{*}P_{s}^{*}(R_{1},D,\{R_{1}\})\bar{t}_{1}\bar{t}_{2}]-\alpha_{1}^{*}P_{s}^{*}(R_{1},D,\{R_{1}\})d_{1}\bar{t}_{1}\bar{t}_{2}.
\end{array}
\end{displaymath}

Note also that $\tilde{Q}(0)=-\alpha_{1}^{*}P_{s}^{*}(R_{1},D,\{R_{1}\})d_{1}\bar{t}_{1}\bar{t}_{2}>0$ since $d_{1}<0$ and $$\tilde{Q}(1)=\bar{t}_{1}\bar{t}_{2}\alpha_{2}\widehat{\alpha}_{1}\alpha_{1}^{*}P_{s}^{*}(R_{1},D,\{R_{1}\})-\lambda_{1}\alpha_{2}\widehat{\alpha}_{1}+\lambda_{2}d_{1}>0,$$ due to the stability conditions (see Lemma \ref{Thm2users}). If $\alpha_{1}^{*}\leq min\{1,\frac{\alpha_{2}\widehat{\alpha}_{1}\lambda_{1}+(\widehat{\lambda}_{1}(1+\widehat{\lambda}_{2})+L_{1})\alpha_{1}\widehat{\alpha}_{2}}{(\widehat{\lambda}_{1}(1+\widehat{\lambda}_{2})+L_{1})P_{s}^{*}(R_{1},D,\{R_{1}\})}\}$, then $\lim_{x\to\infty}\tilde{Q}(x)=-\infty$, and $\tilde{Q}(x)=0$ has two roots of opposite sign, say $x_{*}<0<1<x^{*}$. If $\frac{\alpha_{2}\widehat{\alpha}_{1}\lambda_{1}+(\widehat{\lambda}_{1}(1+\widehat{\lambda}_{2})+L_{1})\alpha_{1}\widehat{\alpha}_{2}}{(\widehat{\lambda}_{1}(1+\widehat{\lambda}_{2})+L_{1})P_{s}^{*}(R_{1},D,\{R_{1}\})}<\alpha_{1}^{*}\leq 1$, then $\lim_{x\to\infty}\tilde{Q}(x)=+\infty$, and $\tilde{Q}(x)=0$ has two positive roots, say $1<\tilde{x}_{*}<x_{3}<x_{4}<\tilde{x}^{*}$, due to the stability conditions. In the former case we have to check if $x^{*}$ is in $S_{x}$, while in the latter case if $\tilde{x}_{*}$ is in $S_{x}$. These zeros, if they lie in $S_{x}$ such that $|Y_{0}(x)|\leq1$, are poles of $H(x,y)$. Denote by 
\begin{displaymath}
\bar{x}=\left\{\begin{array}{rl}
x^{*},&\alpha_{1}^{*}\leq min\{1,\frac{\alpha_{2}\widehat{\alpha}_{1}\lambda_{1}+(\widehat{\lambda}_{1}(1+\widehat{\lambda}_{2})+L_{1})\alpha_{1}\widehat{\alpha}_{2}}{(\widehat{\lambda}_{1}(1+\widehat{\lambda}_{2})+L_{1})P_{s}^{*}(R_{1},D,\{R_{1}\})}\},\\
\tilde{x}_{*},&\frac{\alpha_{2}\widehat{\alpha}_{1}\lambda_{1}+(\widehat{\lambda}_{1}(1+\widehat{\lambda}_{2})+L_{1})\alpha_{1}\widehat{\alpha}_{2}}{(\widehat{\lambda}_{1}(1+\widehat{\lambda}_{2})+L_{1})P_{s}^{*}(R_{1},D,\{R_{1}\})}<\alpha_{1}^{*}\leq 1.
\end{array}\right.
\end{displaymath}
\subsection{Intersection points between $R(x,y)=0$, $B(x,y)=0$.} Let $y\in \tilde{C}_{y}$ and $R(x,y) = 0$, $x = X_{\pm}(y)$. It is easily shown that the resultant in $x$ of $R(x,y)$, $B(x,y)$ is $Res_{x}(R,B,y)=y(y-1)T(y)$, where
\begin{displaymath}
\begin{array}{rl}
T(y)=&-[\alpha_{1}\widehat{\alpha}_{2}\lambda_{2}+(\widehat{\lambda}_{2}(1+\widehat{\lambda}_{1})+L_{2})d_{2}]y^{2}+y[(\widehat{\lambda}_{2}(1+\widehat{\lambda}_{1})+\lambda_{1}+L_{2})d_{2}\\&+(\alpha_{1}\widehat{\alpha}_{2}+d_{2})\alpha_{2}^{*}P_{s}^{*}(R_{2},D,\{R_{2}\})\bar{t}_{1}\bar{t}_{2}]-\alpha_{2}^{*}P_{s}^{*}(R_{2},D,\{R_{2}\})d_{2}\bar{t}_{1}\bar{t}_{2}.
\end{array}
\end{displaymath}

Note that $T(0)=-\alpha_{2}^{*}P_{s}^{*}(R_{2},D,\{R_{2}\})d_{2}\bar{t}_{1}\bar{t}_{2}>0$, $T(1)=\bar{t}_{1}\bar{t}_{2}\alpha_{1}\widehat{\alpha}_{2}\alpha_{2}^{*}P_{s}^{*}(R_{2},D,\{R_{2}\})-\lambda_{2}\alpha_{1}\widehat{\alpha}_{2}+\lambda_{1}d_{2}>0$, since $d_{2}<0$ and due to the stability conditions. If $\alpha_{2}^{*}<min\{1,\frac{\alpha_{1}\widehat{\alpha}_{2}\lambda_{2}+(\widehat{\lambda}_{2}(1+\widehat{\lambda}_{1})+L_{2})\alpha_{2}\widehat{\alpha}_{1}}{(\widehat{\lambda}_{2}(1+\widehat{\lambda}_{1})+L_{2})P_{s}^{*}(R_{2},D,\{R_{2}\})}\}$, $\lim_{y\to \infty}T(x)=-\infty$, and $T(x)$ has two roots of opposite sign, say $y_{*}$, $y^{*}$ such that $y_{*}<0<1<y^{*}$, which in turn implies that $B(X_{0}(y),y)\neq0$, $y\in[y_{1},y_{2}]\subset(0,1)$, or equivalently $B(x,Y_{0}(x))\neq0$, $x\in\mathcal{M}$. When $\frac{\alpha_{1}\widehat{\alpha}_{2}\lambda_{2}+(\widehat{\lambda}_{2}(1+\widehat{\lambda}_{1})+L_{2})\alpha_{2}\widehat{\alpha}_{1}}{(\widehat{\lambda}_{2}(1+\widehat{\lambda}_{1})+L_{2})P_{s}^{*}(R_{2},D,\{R_{2}\})}<\alpha_{2}^{*}\leq1$, $\lim_{y\to \infty}T(y)=+\infty$, and $T(y)$ has two positive roots, say $\widehat{y}_{*}$, $\widehat{y}^{*}$ such that $1<\widehat{y}_{*}<y_{3}<y_{4}<\widehat{y}^{*}$, which in turn implies that $B(X_{0}(y),y)\neq0$, $y\in[y_{1},y_{2}]$, i.e., $B(x,Y_{0}(x))\neq0$, $x\in\mathcal{M}$.\hfill$\square$
\section{Proof of Theorem \ref{dirbv}}\label{diri}
For $y\in \mathcal{D}_{y}=\{x\in\mathcal{C}:|y|\leq1,|X_{0}(y)|\leq1\}$,
\begin{equation}
\begin{array}{l}
\alpha_{2}\widehat{\alpha}_{1}H(X_{0}(y),0)+d_{2}H(0,y)+\frac{\alpha_{2}\widehat{\alpha}_{1}(1-\rho)C(X_{0}(y),y)}{A(X_{0}(y),y)}=0.
\end{array}
\label{con}
\end{equation}
It is easily realised that for $y\in \mathcal{D}_{y}-[y_{1},y_{2}]$ both $H(X_{0}(y),0)$, and $H(0,y)$ are analytic and thus, by means of analytic continuation, we can also consider (\ref{con}) for $y\in[y_{1},y_{2}]$, or equivalently, for $x\in\mathcal{M}$
\begin{equation}
\begin{array}{c}
\alpha_{2}\widehat{\alpha}_{1}H(x,0)+d_{2}H(0,Y_{0}(x))+\frac{\alpha_{2}\widehat{\alpha}_{1}(1-\rho)C(x,Y_{0}(x))}{A(x,Y_{0}(x))}=0.
\end{array}
\label{con2}
\end{equation}
Then, multiplying both sides of (\ref{con2}) by the imaginary complex number $i$, and noticing that $H(0,Y_{0}(x))$ is real for $x\in\mathcal{M}$, since $Y_{0}(x)\in[y_{1},y_{2}]$, we have
\begin{equation}
\begin{array}{c}
Re(iH(x,0))=Re(-i\frac{C(x,Y_{0}(x))}{A(x,Y_{0}(x))})(1-\rho),\,x\in\mathcal{M}.
\end{array}
\label{p1}
\end{equation}
To proceed, we have to check for poles of $H(x,0)$ in $S:=G_{\mathcal{M}}\cap\bar{D}_{x}^{c}$, and $D_{x}=\{x:|x|<1\}$, $\bar{D}_{x}=\{x:|x|\leq1\}$, $\bar{D}_{x}^{c}=\{x:|x|>1\}$. These poles, if exist, they coincide with the zeros of $A(x,Y_{0}(x))$ in $S_{x}$; see \ref{ap4}.
Note that equation (\ref{p1}) is defined on $\mathcal{M}$. In order to solve (\ref{p1}) we must firstly conformally transform the problem from $\mathcal{M}$ to the unit circle. Let the conformal mapping, $z=\gamma(x):G_{\mathcal{M}}\to G_{\mathcal{C}}$, and its inverse $x=\gamma_{0}(z):G_{\mathcal{C}}\to G_{\mathcal{M}}$\footnote{For more details on the construction of the conformal mappings see \ref{conf}}. 

Then, we have the following problem: Find a function $\tilde{T}(z)=H(\gamma_{0}(z),0)$ regular for $z\in G_\mathcal{C}$, and continuous for $z\in\mathcal{C}\cup G_\mathcal{C}$ such that, $Re(i\tilde{T}(z))=w(\gamma_{0}(z))$, $z\in\mathcal{C}$. In case $H(x,0)$ has no poles in $S$, the solution of the Dirichlet problem with boundary condition (\ref{p1}) is:
\begin{equation}
\begin{array}{c}
H(x,0)=-\frac{1-\rho}{2\pi}\int_{|t|=1}f(t)\frac{t+\gamma(x)}{t-\gamma(x)}\frac{dt}{t}+C,\,x\in\mathcal{M},
\end{array}
\label{sol1}
\end{equation}
where $f(t)=Re(-i\frac{C(\gamma_{0}(t),Y_{0}(\gamma_{0}(t)))}{A(\gamma_{0}(t),Y_{0}(\gamma_{0}(t)))})$, $C$ a constant to be defined by setting $x=0\in G_{\mathcal{M}}$ in (\ref{sol1}) and using the fact that $H(0,0)=1-\rho$, $\gamma(0)=0$\footnote{In case $H(x,0)$ has a pole, say $\bar{x}$, we still have a Dirichlet problem for the function $(x-\bar{x})H(x,0)$.}. 

Following the discussion above,
\begin{displaymath}
\begin{array}{c}
C=(1-\rho)(1+\frac{1}{2\pi}\int_{|t|=1}f(t)\frac{dt}{t}),
\end{array}
\end{displaymath}
Setting $t=e^{i\phi}$, $\gamma_{0}(e^{i\phi})=\rho(\psi(\phi))e^{i\psi(\phi)}$, we obtain after some algebra,
\begin{displaymath}
\begin{array}{c}
f(e^{i\phi})=\frac{d_{1}\alpha_{2}^{*}\sin(\psi(\phi))(1-Y_{0}(\gamma_{0}(e^{i\phi}))^{-1})}{\rho(\psi(\phi))k(\phi)},\end{array}
\end{displaymath}
which is an odd function of $\phi$, and 
\begin{displaymath}
\begin{array}{c}
k(\phi)=[\alpha_{2}\widehat{\alpha}_{1}(1-Y_{0}^{-1}(\gamma_{0}(e^{i\phi})))+d_{1}(1-\frac{\cos(\psi(\phi))}{\rho(\psi(\phi))})]^{2}+(d_{1}\frac{\sin(\psi(\phi))}{\rho(\psi(\phi))})^{2}.
\end{array}
\end{displaymath}
Thus, $C=1-\rho$. Substituting in (\ref{sol1}), we obtain after simple calculations the integral representation of $H(x,0)$ on a real interval given in (\ref{sol1}).

Similarly, $H(0,y)$ is derived by solving another Dirichlet boundary value problem on the closed contour $\mathcal{L}$. Finally, using (\ref{we}) we uniquely obtain $H(x,y)$.
\section{Proof of Theorem \ref{rhbv}}\label{riemann}
Clearly,
\begin{equation}
\begin{array}{c}
A(X_{0}(y),y)G(X_{0}(y))=-B(X_{0}(y),y)L(y),\,y\in \mathcal{D}_{y}.
\end{array}
\label{po31}
\end{equation}
For $y\in \mathcal{D}_{y}-[y_{1},y_{2}]$ both $G(X_{0}(y))$, $L(y)$ are analytic and the right-hand side can be analytically continued up to the slit
$[y_1, y_2]$ or equivalently, for $x\in\mathcal{M}$
\begin{equation}
\begin{array}{c}
A(x,Y_{0}(x))G(x)=-B(x,Y_{0}(x))L(Y_{0}(x)).
\end{array}
\label{za1}
\end{equation}
Clearly, $G(x)$ is holomorphic for $D_{x}$, continuous for $\bar{D}_{x}$. However, $G(x)$ might has poles in $S_{x}$, based on the values of the system parameters. These poles (if exist) coincide with the zeros of $A(x,Y_{0}(x))$ in $S_{x}$; see \ref{ap4}. For $y\in[y_{1},y_{2}]$, let $X_{0}(y)=x\in\mathcal{M}$ and realize that $Y_{0}(X_{0}(y))=y$ so that $y=Y_{0}(x)$. Taking into account the possible poles of $G(x)$, and noticing that $L(Y_{0}(x))$ is real for $x\in\mathcal{M}$, since $Y_{0}(x)\in[y_{1},y_{2}]$, we have
\begin{equation}
\begin{array}{c}
Re[iU(x)\tilde{G}(x)]=0,\,x\in\mathcal{M},\vspace{2mm}\\
U(x)=\frac{A(x,Y_{0}(x))}{(x-\bar{x})^{r}B(x,Y_{0}(x))},\,\tilde{G}(x)=(x-\bar{x})^{r}G(x),
\end{array}
\label{df3}
\end{equation}
where $r=0,1$, whether $\bar{x}$ is zero or not of $A(x,Y_{0}(x))$ in $S_{x}$. Thus, $\tilde{G}(x)$ is regular for $x\in G_{\mathcal{M}}$, continuous for $x\in\mathcal{M}\cup G_{\mathcal{M}}$, and $U(x)$ is a non-vanishing function on $\mathcal{M}$. We then conformally transform the problem (\ref{df3}) from $\mathcal{M}$ to the unit circle, using the mapping $z=\gamma(x):G_{\mathcal{M}}\to G_{\mathcal{C}}$, and its inverse given by $x=\gamma_{0}(z):G_{\mathcal{C}}\to G_{\mathcal{M}}$\footnote{For more details on the construction of the conformal mappings see \ref{conf}}. 

Then, the problem in (\ref{df3}) is reduced to the following: Find a function $F(z):=\tilde{G}(\gamma_{0}(z))$, regular in $G_{\mathcal{C}}$, continuous in $G_{\mathcal{C}}\cup\mathcal{C}$ such that, $Re[iU(\gamma_{0}(z))F(z)]=0,\,z\in\mathcal{C}$. 

A crucial step in the solution of the boundary value problem is the determination of the index $\chi=\frac{-1}{\pi}[arg\{U(x)\}]_{x\in \mathcal{M}}$, where $[arg\{U(x)\}]_{x\in \mathcal{M}}$, denotes the variation of the argument of the function $U(x)$ as $x$ moves along the closed contour $\mathcal{M}$ in the positive direction, provided that $U(x)\neq0$, $x\in\mathcal{M}$. The value of the index is closely related to the stability conditions (see Theorem \ref{Thm2users}), and following the lines in \cite{fay} we have,
\begin{lemma}\begin{enumerate}
\item If $\lambda_{2}<\lambda_{2}^{*}$, then $\chi=0$ is equivalent to
\begin{displaymath}
\begin{array}{l}
\frac{d A(x,Y_{0}(x))}{dx}|_{x=1}<0\Leftrightarrow\lambda_{1}<\bar{t}_{1}\bar{t}_{2}[\alpha_{1}^{*}\PssRoneDRone+\frac{d_{1}\lambda_{2}}{\bar{t}_{1}\bar{t}_{2}\alpha_{2}\widehat{\alpha}_{1}}],\vspace{2mm}\\ \frac{d B(X_{0}(y),y)}{dy}|_{y=1}<0\Leftrightarrow\lambda_{2}<\bar{t}_{1}\bar{t}_{2}[\alpha_{2}^{*}\PssRtwoDRtwo+\frac{d_{2}\lambda_{1}}{\bar{t}_{1}\bar{t}_{2}\alpha_{1}\widehat{\alpha}_{2}}].
\end{array}
\end{displaymath}
\item If $\lambda_{2}\geq \lambda_{2}^{*}$, $\chi=0$ is equivalent to 
\begin{displaymath}
\begin{array}{c}
\frac{d B(X_{0}(y),y)}{dy}|_{y=1}<0\Leftrightarrow \lambda_{2}<\bar{t}_{1}\bar{t}_{2}[\alpha_{2}^{*}\PssRtwoDRtwo+\frac{d_{2}\lambda_{1}}{\bar{t}_{1}\bar{t}_{2}\alpha_{1}\widehat{\alpha}_{2}}].
\end{array}
\end{displaymath}
\end{enumerate}
\end{lemma}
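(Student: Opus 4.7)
The plan is to compute the index $\chi = -\frac{1}{\pi}[\arg U(x)]_\mathcal{M}$ via the argument principle, following the standard technique of \cite{fay}. Since $A(x,Y_{0}(x))$ and $B(x,Y_{0}(x))$ are both holomorphic on $G_{\mathcal{M}}$ (after absorbing the possible pole of $H(x,0)$ into the factor $(x-\bar{x})^{r}$), one has
\[
-\pi\chi = [\arg A(x,Y_{0}(x))]_{\mathcal{M}} - [\arg B(x,Y_{0}(x))]_{\mathcal{M}} - 2\pi r\,\mathbb{1}\{\bar{x}\in G_{\mathcal{M}}\},
\]
and each bracketed variation equals $2\pi$ times the number of zeros of the corresponding function inside $G_{\mathcal{M}}$. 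Hence $\chi=0$ is a zero-counting identity.

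The crucial algebraic input is provided by Lemma \ref{LEM} (in its form symmetric in the two components): in the subcritical regime $\lambda_{2}<\lambda_{2}^{*}:=\bar{t}_{1}\bar{t}_{2}\alpha_{2}\widehat{\alpha}_{1}$, one has $Y_{0}(1)=1$, so that both $A(x,Y_{0}(x))$ and $B(x,Y_{0}(x))$ vanish at $x=1$; in the regime $\lambda_{2}\geq\lambda_{2}^{*}$, one has $Y_{0}(1)<1$, and consequently $A(x,Y_{0}(x))$ no longer vanishes near $x=1$, so only the behaviour of $B$ contributes to the argument variation. Combined with the location of $\bar{x}$ (controlled by the pole analysis of $H(x,0)$ in Appendix \ref{ap4}, which fixes $r$), this dichotomy is exactly what separates the two cases stated in the lemma.

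To count the remaining zeros on the real segment of $G_{\mathcal{M}}$, I would use a sign-change argument: both $A(x,Y_{0}(x))$ and $B(x,Y_{0}(x))$ are real-valued on the real axis, so the number of real interior zeros is controlled by the signs of the derivatives at the endpoints. Concretely, $\frac{d}{dx}A(x,Y_{0}(x))\big|_{x=1}<0$ prevents an extra zero on the real diameter of $G_{\mathcal{M}}$, and symmetrically $\frac{d}{dy}B(X_{0}(y),y)\big|_{y=1}<0$ prevents one for $B$ after passing to the contour $\mathcal{L}$ via $y=Y_{0}(x)$. The equivalence between these derivative conditions and the stated $\lambda$-inequalities is then a direct computation: implicit differentiation of $R(x,Y_{0}(x))=0$ at $x=1$ yields $Y_{0}'(1)$ in closed form, and substituting into
\[
\frac{d}{dx}A(x,Y_{0}(x))\big|_{x=1} = \partial_{x}A(1,1) + Y_{0}'(1)\partial_{y}A(1,1),
\]
together with the explicit formulas for $A$, $d_{1}$, $\widehat{\alpha}_{i}$ and the conservation-flow identities \eqref{r1}--\eqref{r2}, gives the inequality in $\lambda_{1}$; the second equivalence follows by the $x\leftrightarrow y$ symmetry.

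The main obstacle will be the careful bookkeeping of the real-zero count at the transition $\lambda_{2}=\lambda_{2}^{*}$, where $x=1$ sits tangentially on $\mathcal{M}$ itself: here the simple holomorphicity/argument-principle input must be supplemented by a limiting argument from the strictly subcritical regime and by the precise pole analysis of $H(x,0)$ in Appendix \ref{ap4}, and it is this analysis that forces the lemma's split into the two subcases with, respectively, two and one required derivative conditions.
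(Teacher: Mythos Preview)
Your proposal is correct and follows precisely the route the paper intends: the paper gives no explicit argument for this lemma but simply writes ``Following the lines in \cite{fay} we have,'' and your sketch is exactly the standard index computation from that reference. Your identification $\lambda_{2}^{*}=\bar{t}_{1}\bar{t}_{2}\alpha_{2}\widehat{\alpha}_{1}$ (the threshold at which $Y_{0}(1)=1$, by the $y$-analogue of Lemma~\ref{LEM}) and the implicit-differentiation computation yielding the $\lambda$-inequalities are the correct ingredients; the only cosmetic point is that the zero of $B(x,Y_{0}(x))$ is tracked on $\mathcal{M}$ itself (via $Y_{0}(X_{0}(y))=y$ near $y=1$), not by passing to $\mathcal{L}$.
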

Thus, under stability conditions (see Lemma \ref{Thm2users}) the problem defined in (\ref{df3}) has a unique solution for $x\in G_{\mathcal{M}}$ given by,
\begin{equation}
\begin{array}{rl}
H(x,0)=&K(x-\bar{x})^{-r}\exp[\frac{1}{2i\pi}\int_{|t|=1}\frac{\log\{J(t)\}}{t-\gamma(x)}dt]-\frac{\alpha_{1}^{*}\PssRoneDRone d_{2}H(0,0)}{d_{1}d_{2}-\alpha_{1}\widehat{\alpha}_{2}\alpha_{2}\widehat{\alpha}_{1}},
\end{array}
\label{sool1}
\end{equation}
where $K$ is a constant and $J(t)=\frac{\overline{U_{1}(t)}}{U_{1}(t)}$, $U_{1}(t)=U(\gamma_{0}(t))$, $|t|=1$. Setting $x=0$ in (\ref{sool1}) we derive a relation between $K$ and $H(0,0)$. Then, for $x=1\in G_{\mathcal{M}}$, and using the first in (\ref{rd}) we can obtain $K$ and $H(0,0)$. Substituting back in (\ref{sool1}) we obtain (\ref{fin}) for $x\in G_{\mathcal{M}}$. Similarly, we can determine $H(0,y)$ by solving another Riemann-Hilbert boundary value problem on the closed contour $\mathcal{L}$. Then, using (\ref{we}) we uniquely obtain $H(x,y)$.
\section{Construction of the conformal mappings}\label{conf}
The construction of the conformal mapping $\gamma(x)$ is not a trivial task. However, we can construct its inverse in order to obtain expressions for the expected value of the queue lengths in each relay node. To proceed, we need a representation of $\mathcal{M}$ in polar coordinates, i.e., $\mathcal{M}=\{x:x=\rho(\phi)\exp(i\phi),\phi\in[0,2\pi]\}.$ This procedure is described in detail in \cite{coh}. 

In the following we summarize the basic steps: Since $0\in G_{\mathcal{M}}$, for each $x\in\mathcal{M}$, a relation between its absolute value and its real part is given by $|x|^{2}=m(Re(x))$ (see Lemma \ref{SQ}). Given the angle $\phi$ of some point on $\mathcal{M}$, the real part of this point, say $\delta(\phi)$, is the zero of $\delta-\cos(\phi)\sqrt{m(\delta)}$, $\phi\in[0,2\pi].$ Since $\mathcal{M}$ is a smooth, egg-shaped contour, the solution is unique. Clearly, $\rho(\phi)=\frac{\delta(\phi)}{\cos(\phi)}$, and the parametrization of $\mathcal{M}$ in polar coordinates is fully specified.

 Then, the mapping from $z\in G_{\mathcal{C}}$ to $x\in G_{\mathcal{M}}$, where $z = e^{i\phi}$ and $x= \rho(\psi(\phi))e^{i\psi(\phi)}$, satisfying $\gamma_{0}(0)=0$, $\gamma_{0}(z)=\overline{\gamma_{0}(\bar{z})}$ is uniquely determined by (see \cite{coh}, Section I.4.4),
\begin{equation}
\begin{array}{rl}
\gamma_{0}(z)=&z\exp[\frac{1}{2\pi}\int_{0}^{2\pi}\log\{\rho(\psi(\omega))\}\frac{e^{i\omega}+z}{e^{i\omega}-z}d\omega],\,|z|<1,\\
\psi(\phi)=&\phi-\int_{0}^{2\pi}\log\{\rho(\psi(\omega))\}\cot(\frac{\omega-\phi}{2})d\omega,\,0\leq\phi\leq 2\pi,
\end{array}
\label{zx}
\end{equation}
i.e., $\psi(.)$ is uniquely determined as the solution of a Theodorsen integral equation with $\psi(\phi)=2\pi-\psi(2\pi-\phi)$. This integral equation has to be solved numerically by an iterative procedure. For the numerical evaluation of the integrals we split the interval $[0,2\pi]$ into $M$ parts of length $2\pi/M$, by taking $M$ points $\phi_{k}=\frac{2k\pi}{M}$, $k=0,1,...,M-1$. For the $M$ points given by their angles $\left\{\phi_{0},...,\phi_{M-1}\right\}$ we should solve the second in (\ref{zx}) to obtain the corresponding points $\left\{\psi(\phi_{0}),...,\psi(\phi_{M-1})\right\}$, iteratively from,
\begin{equation}
\begin{array}{rl}
\psi_{0}(\phi_{k})=&\phi_{k},\\
\psi_{n+1}(\phi_{k})=&\phi_{k}-\frac{1}{2\pi}\int_{0}^{2\pi}\log\left\{\frac{\delta(\psi_{n}(\omega))}{\cos(\psi_{n}(\omega))}\right\}\cot[\frac{1}{2}(\omega-\phi_{k})]d\omega,
\end{array}
\label{uip}
\end{equation}
where $\lim_{n\to\infty}\psi_{n+1}(\phi)=\psi(\phi)$, and $\delta(\psi_{n}(\omega))$ is determined by,
\begin{displaymath}
\delta(\psi_{n}(\omega))=cos(\psi_{n}(\omega))\sqrt{m(\delta(\psi_{n}(\omega)))},
\end{displaymath}
using the Newton-Raphson root finding method. For each step, the integral in (\ref{uip}) is
numerically determined by again using the trapezium rule with $M$ parts of equal length
$2\pi/M$. For the iteration, we have used the following stopping criterion $\max_{k\in\left\{0,1,...,M-1\right\}}\left|\psi_{n+1}(\phi_{k})-\psi_{n}(\phi_{k})\right|<10^{-6}$

Having obtained $\psi(\phi)$ numerically, the values of the conformal mapping $\gamma_{0}(z)$, $\left|z\right|\leq 1$, can be calculated by applying the Plemelj-Sokhotski formula to the first in (\ref{zx})
\begin{displaymath}
\gamma_{0}(e^{i\phi})=e^{i\psi(\phi)}\frac{\delta(\psi(\phi))}{\cos(\psi(\phi))}=\delta(\psi(\phi))[1+i \tan(\psi(\phi))],\,0\leq\phi\leq 2\pi.
\end{displaymath}
We further need to find $\gamma(1)$, $\gamma^{\prime}(1)$. To do this, one needs to use the Newton's method and solve $\gamma_{0}(z_{0})=1$, in $[0,1]$, i.e., $z_{0}$ is the zero in $[0,1]$ of $\gamma_{0}(z)=1$. Then, $\gamma(1)=z_{0}$. Moreover, using the first in (\ref{zx})
\begin{equation}
\gamma^{\prime}(1)=(\gamma_{0}^{\prime}(z_{0}))^{-1}=\left(\frac{1}{\gamma(1)}+\frac{1}{2\pi i}\int_{0}^{2\pi}\log\{\rho(\psi(\omega))\}\frac{2e^{i\omega}}{(e^{i\omega}-\gamma(1))^{2}}d\omega\right)^{-1},
\label{cv}
\end{equation}
which can be obtained numerically by using the Trapezoidal rule for the integral on the right-hand side of (\ref{cv}).

Clearly, the numerical computation of the exact conformal mappings is generally time consuming. Since $\mathcal{M}$, $\mathcal{L}$ are close to ellipses, alternatively, we can approximate them by conformal mappings that map the interior of ellipses to $G_{\mathcal{C}}$ \cite{neh}. In particular, we can approximate the contour $\mathcal{M}$ by ellipse $\mathcal{E}$ with semi-axes $\rho(0)$, $\rho(\pi/2)$. Then, $\epsilon(x)$ maps $G_{\mathcal{E}}$ to $G_{\mathcal{C}}$ \cite{neh}, where
\begin{displaymath}
\begin{array}{rl}
\epsilon(x)=\sqrt{k}sn\left(\frac{2Q}{\pi}\sin^{-1}(\frac{x}{\sqrt{\rho^{2}(0)-\rho^{2}(\pi/2)}});k^{2}\right),&k=16q\prod_{n=1}^{\infty}\left(\frac{1+q^{2n}}{1+q^{2n-1}}\right)^{8},\\
q=\left(\frac{\rho(0)-\rho(\pi/2)}{\rho(0)+\rho(\pi/2)}\right)^{2},&Q=\int_{0}^{1}\frac{dt}{\sqrt{(1+t^{2})(1-k^{2}t^{2})}},
\end{array}
\end{displaymath} 
where $sn(w;l)$ is the Jacobian elliptic function. Our approximation for $\gamma(x)$ is $\epsilon(x)$, $x\in \mathcal{M}\cup G_{\mathcal{M}}$.
\bibliographystyle{spmpsci}
\bibliography{bibliography}
\end{document}